\documentclass[a4paper,USenglish,cleveref,autoref,thm-restate,numberwithinsect]{lipics-v2021}

\pdfoutput=1
\hideLIPIcs
\nolinenumbers

\graphicspath{{pic/}{figures/}}

\usepackage{xspace}
\usepackage{mathtools}
\usepackage{tikz}
\usepackage{float}

\newcommand*{\N}{\mathbb{N}}
\newcommand*{\R}{\mathbb{R}}
\newcommand*{\Q}{\mathbb{Q}}

\newcommand*{\td}{[0,\infty)}
\newcommand*{\defeq}{\vcentcolon=}
\newcommand*\diff{\mathop{}\!\mathrm{d}}

\newcommand{\Pclass}{\mathbf{P}}

\newtheorem{construction}[theorem]{Construction}

\theoremstyle{definition}
\newtheorem{exmpl}[theorem]{Example}
\theoremstyle{plain}

\newcommand{\bZ}{\bar{Z}}

\title{Bounded Analog Complexity}

\author{Ho-Lin Chen}{Department of Electrical Engineering,
  National Taiwan University, Taiwan}
  {holinchen@ntu.edu.tw}
  {https://orcid.org/0000-0002-6171-9962}
  {Supported in part by NSTC (Taiwan) grant 113-2221-E-002-204-MY3}

\author{Xiang Huang}{Department of Computer Science,
  University of Illinois Springfield, USA}
  {xhuan5@uis.edu}
  {https://orcid.org/0000-0002-4815-6130}
  {Supported in part by Department of Energy EXPRESS grant DE-SC0024278}

\authorrunning{H.-L. Chen and X. Huang}

\Copyright{Ho-Lin Chen and Xiang Huang}

\ccsdesc[500]{Theory of computation~Models of computation}
\ccsdesc[300]{Theory of computation~Computational complexity
  and cryptography}

\keywords{Analog computation, GPAC, bounded complexity,
  chemical reaction networks, polynomial ODE}

\supplement{Interactive Colab notebook reproducing the
  $\alpha^\beta$ exponentiation figure in
  \cref{sec:richness}. URL submitted as supplementary material
  via the LIPIcs submission form.}

\EventEditors{TBD}
\EventNoEds{2}
\EventLongTitle{32nd International Conference on
  DNA Computing and Molecular Programming (DNA32)}
\EventShortTitle{DNA32}
\EventAcronym{DNA}
\EventYear{2026}
\EventDate{TBD}
\EventLocation{TBD}
\EventLogo{}
\SeriesVolume{TBD}
\ArticleNo{TBD}

\begin{document}

\maketitle

\begin{abstract}
Current analog complexity theory, built on the General-Purpose
Analog Computer (GPAC) model and polynomial ODEs, allows
unbounded state variables---an assumption that is physically
unrealistic for chemical reaction networks and other
laboratory-scale analog computers.  We develop a
\emph{bounded analog complexity theory} in which all state
variables remain in compact intervals and physical time (wall-clock time) is the
only diverging resource.

Our main technical contribution is
\emph{bounded surrogate compilation}, a compilation
framework that transforms unbounded polynomial ODE systems into
bounded ones while preserving computational limits and
time-to-precision guarantees.  We prove that if a system is compiled into a bounded system through our algorithm, the wall-clock time of the compiled system is polynomial in the arc length and physical time of the original system.

We exhibit concrete constructions demonstrating fine-grained
bounded time complexity---a tunable polynomial-degree family,
a Lambert-$W$--based system achieving $\Theta(r\log r)$
time-to-precision (where $r$ is the desired precision parameter,
in nats: $|x(t)-\alpha|<e^{-r}$), and an iterated-logarithm tower
realizing arbitrarily high complexity classes---all for the task
of computing the constant~$1$.  We show that bounded GPACs are
closed under exponentiation ($\alpha^\beta$) with time
complexity equal to the harder input, and that the full
GPAC-to-CRN compilation pipeline preserves time complexity
class via a low-pass filter analysis of readout modules.

\end{abstract}

%
\section{Introduction}
\label{sec:introduction}

This paper develops a time complexity theory for chemical
reaction network (CRN) computation in which all species
concentrations remain bounded---as they must in any physical
implementation.  Our main technical contribution is a
compilation method---\emph{bounded surrogate
compilation}---that transforms any polynomial ODE
system into a bounded one while preserving all computed limits.
In a bounded system, physical time---the time parameter $t$ of
the ODE, representing the actual elapsed time of the
system---is a faithful complexity resource: it agrees with
trajectory length up to constant factors.

We work primarily with GPACs (polynomial ODEs), which offer a
larger design space and greater algebraic flexibility than CRNs.
However, our ultimate target is CRN implementation: every
bounded GPAC construction in this paper can be translated
into a CRN via dual-rail encoding and a readout subtraction
module (\cref{sec:crn-pipeline}).  We prove that the
subtraction module---the only step that could degrade
precision---preserves time complexity.

We begin by recalling the line of work that leads to the
present paper.

\subsection{From computable numbers to analog complexity}
\label{subsec:history}

\begin{enumerate}[(i)]
  \item Turing~\cite{turing1936} introduced the notion of
    \emph{computable numbers}: real numbers whose digits can be
    produced by a mechanical procedure.
  \item Shannon~\cite{shannon1941} introduced the
    General-Purpose Analog Computer (GPAC) as a mathematical
    abstraction of Bush's differential analyzer.  After
    refinements by Pour-El~\cite{pour-el1974},
    Lipshitz--Rubel~\cite{lipshitz1987}, and
    Gra\c{c}a--Costa~\cite{graca2003}, GPACs are now
    characterized by polynomial initial value problems (PIVPs):
    systems of the form $y' = p(y)$, $y(0) = y_0$, where $p$
    is a vector of polynomials with rational coefficients.
  \item Bournez, Gra\c{c}a, and Pouly~\cite{bournez2017}
    proved that GPACs, measured by \emph{trajectory length}
    $L(t) = \int_0^t \|y'(s)\|\,ds$, are equivalent to Turing
    machines in both computability and complexity.  In
    particular, polynomial trajectory length characterizes
    exactly the class $\Pclass$ of polynomial-time computable
    functions.
  \item Huang et al.~\cite{huang2019rtcrn2}
    defined CRN-computable numbers: real numbers computed as
    limits of chemical reaction network (CRN) trajectories.
    Notably, in their framework \emph{all species
    concentrations are required to be bounded}.  This ensures
    that the time parameter $t$ of the ODE aligns with the
    notion of physical time, preventing Zeno-type speedups
    caused by diverging variables.  Under this condition, the
    class of real-time CRN-computable numbers was shown to be
    a field containing transcendental numbers such as $e$ and
    $\pi$.  Our complexity theory can be viewed as extending
    this framework from a single convergence rate (real-time)
    to a full hierarchy of rates.
  \item Huang and Huls~\cite{huang2022lpp} extended these ideas
    to large-population protocols (LPPs), showing that LPPs
    compute the same class of real numbers as GPACs and CRNs.
    Population protocols are inherently bounded: species live
    on a probability simplex.
\end{enumerate}

A common thread runs through items (iv) and (v): the
computations are bounded by design, as dictated by chemistry.
No test tube holds infinite molecules; no population exceeds
its total count.  In these settings, the notion of time is
unambiguous---physical time \emph{is} the complexity resource,
because no cost can be hidden in diverging auxiliary variables.

In contrast, the Bournez--Gra\c{c}a--Pouly framework in~(iii)
permits unbounded state variables---mathematically convenient
but physically unrealistic for CRN implementations.  This is
why trajectory length, rather than physical time, serves as
their complexity measure.  Our goal is to bridge this gap:
develop a complexity theory that respects the bounded nature
of chemical systems while retaining the full power of the
GPAC/PIVP framework as a design tool.

\subsection{Why trajectory length, and why we depart from it}
\label{subsec:why-length}

In the Bournez--Gra\c{c}a--Pouly framework, the complexity
resource is not physical time~$t$ but the \emph{trajectory
length} (arc length) $L(t) = \int_0^t \|y'(s)\|\,ds$.  This
is the total distance traveled by the state vector in $\R^d$,
measured by integrating the \emph{speed}
$\|y'(t)\|$ of the solution curve.  When variables are unbounded, the speed
$\|y'(t)\| = \|p(y(t))\|$ often grows without bound as well,
since $p$ is a polynomial evaluated at a growing argument.
In such cases, $L(t)$ can diverge even over a finite time
interval, and arc length and physical time decouple: $L$
may be infinite while $t$ is finite.

A concrete example makes this vivid.  Consider
$x'(t) = (1-x)(1+v)^2$, $v'(t) = (1+v)^2$,
$x(0) = v(0) = 0$.  The output
$x(t) = 1 - e^{1-1/(1-t)}$ converges to $1$ at the finite time
$t = 1$---a na\"{\i}ve reading suggests this system ``computes
$1$ in under one second.''  But the auxiliary variable
$v(t) = 1/(1-t) - 1$ diverges, and the trajectory speed
$\|y'(t)\| \ge (1+v)^2 = 1/(1-t)^2$ explodes.  The arc length
$L(t) \ge t/(1-t) \to \infty$: Bournez et al.'s ``time''
is infinite, even though the clock reads less than one second.
This illustrates the robustness of arc length as a complexity
measure: it is invariant under time reparametrization, so
no finite-time speedup via function composition can reduce
the true cost.
Despite the apparent speedup, the arc-length cost for $r$
bits of precision is $\Theta(r)$---the explosion hides
computational work in the diverging variable, not in
genuine progress toward the answer.

A dual scenario shows that arc length is also unreasonable as a
resource on the convergent side.

\begin{exmpl}[BFK protocol for $1/3$]
\label{ex:bfk}
The Bournez--Fraigniaud--Koegler large-population protocol for
$1/3$~\cite{bournez2012lpp} runs over three states
$Q = \{1, 2, 3\}$ with cyclic transition rules
\[
  i\ j \;\to\; (i{+}1)\ (j{+}1) \pmod 3
  \qquad \text{for every $(i, j) \in Q^2$.}
\]
The mean-field ODE on the $3$-simplex
$\{(x_1, x_2, x_3) : x_i \ge 0,\ x_1 + x_2 + x_3 = 1\}$ is
\[
  \dot x_i \;=\; 2\,(x_{i-1} - x_i), \qquad i \in \{1, 2, 3\} \pmod 3,
\]
with unique stable equilibrium $(1/3, 1/3, 1/3)$. Starting from
the corner $(1, 0, 0)$, the trajectory spirals toward this
equilibrium with envelope decaying as $e^{-3t}$ (\cref{fig:bfk}).
The arc length is uniformly bounded ($L(t) \le \int_0^\infty
\|y'(s)\|\, ds < \infty$), so by the BGP count the cost is
finite. But the limit is approached only as $t \to \infty$, and
the rate of approach---which is the natural and substantive
complexity question for such systems---is not visible in the arc
length.

\begin{figure}[H]
\centering
\begin{tikzpicture}[scale=2]
  \coordinate (V1) at (0, 1);
  \coordinate (V2) at (-0.866, -0.5);
  \coordinate (V3) at (0.866, -0.5);
  \coordinate (C)  at (0, 0);
  \draw[thick, gray] (V1) -- (V2) -- (V3) -- cycle;
  \draw[thick, blue, smooth, samples=200, domain=0:2.5]
    plot ({exp(-3*\x)*sin(sqrt(3)*\x r)},
          {exp(-3*\x)*cos(sqrt(3)*\x r)});
  \fill[red] (V1) circle (0.04);
  \fill[blue] (C)  circle (0.04);
  \node[above]      at (V1) {\small $(1,0,0)$};
  \node[below left] at (V2) {\small $(0,1,0)$};
  \node[below right] at (V3) {\small $(0,0,1)$};
  \node[right, blue] at (C) {\small $(1/3, 1/3, 1/3)$};
\end{tikzpicture}
\caption{Trajectory of the BFK protocol for~$1/3$ from
$(1, 0, 0)$ to the equilibrium $(1/3, 1/3, 1/3)$ on the
$2$-simplex. The curve has finite total arc length but reaches
the equilibrium only as $t \to \infty$.}
\label{fig:bfk}
\end{figure}
\end{exmpl}

When all state variables are bounded, the speed
$\|y'(t)\| = \|p(y(t))\|$ is bounded on the compact state space,
so $L(t) \le v_{\max}\, t$: the trajectory length cannot grow
faster than $t$. We adopt the time variable $t$ itself as the
complexity resource, and define the corresponding complexity
theory on this basis.

But not every system of interest has bounded variables.
The resolution is \emph{bounded surrogate compilation}
(\cref{sec:bounded-compilation}): any polynomial ODE system
can be compiled into a bounded one that computes the same
output, with at most polynomial overhead in dimension.
The key idea is to slow the internal clock in proportion to
the growth of unbounded variables, replacing diverging
quantities with bounded surrogates in $[0,1]$.
Boundedness therefore does not restrict the class of
computable objects---it reinterprets all analog computations
in a setting where physical time is a faithful complexity
resource.

\subsection{Complexity in the bounded setting}
\label{subsec:preview-complexity}

Once physical time is faithful, we define time complexity in
the style of computable
analysis~\cite{ko1991complexity,weihrauch2000computable}:
the \emph{time modulus} $\mu(r)$ is the physical time needed
to achieve $r$~bits of precision,\footnote{Our formal modulus
uses $|x - \alpha| < e^{-r}$, so $r$ is in nats; polynomial
classifications are invariant under the bits/nats rescaling.}
and its growth rate classifies the computation.
This is finer than the Anderson--Joshi
speed~\cite{anderson2024}, which captures exponential
convergence rates but does not distinguish sub-exponential
profiles, and it is the natural analog of the framework in
which CRN-computable
numbers~\cite{huang2019rtcrn2} were originally studied.

\subsection{Results}
\label{subsec:results}

We prove the following results.

\begin{enumerate}
  \item \textbf{Bounded surrogate compilation}
    (\cref{sec:bounded-compilation}).
    Any unbounded PIVP can be compiled into a bounded one
    preserving all computed limits, using bounded surrogate
    variables $U_{n,m} = f^m/(1+f^n) \in [0,1]$.
    Unbounded quantities---including the original time variable
    $t$ and diverging state variables---are eliminated from the
    compiled system via what we call the \emph{pass-through
    technique}: they influence the dynamics but are never
    represented as state variables
    (\cref{rmk:pass-through}).

  \item \textbf{Bounded analog complexity for limit-readout
    computation} (\cref{sec:complexity}).
    The complexity developed here is for computing real numbers
    under the limit-readout convention---a bounded PIVP computes
    $\alpha$ if a designated output variable converges to
    $\alpha$ at a quantified rate---not for encoding
    Turing-machine decision problems via GPAC or CRN.
    Bounded concentrations are strictly stronger than bounded
    derivatives, but the latter reduces to the former via
    compilation (\cref{rmk:bounded-deriv}). The main result is
    that any (possibly unbounded) PIVP whose trajectory length
    and physical time are both polynomial in the BGP
    sense~\cite{bournez2017} compiles to a bounded PIVP with
    polynomial time modulus (\cref{thm:compilation}). We do not
    claim a converse: BGP arc length is uninformative on
    convergent systems (\cref{ex:bfk}).

  \item \textbf{Constructible complexity classes}
    (\cref{sec:constructible}).
    We construct bounded PIVPs realizing prescribed time
    complexities for the task of computing~$1$:
    polynomial of any degree~$n$, $\Theta(r\log r)$ via the
    Lambert~$W$ function, and arbitrarily high classes via
    iterated-logarithm towers.

  \item \textbf{Closure under exponentiation}
    (\cref{sec:richness}).
    If $\alpha > 0$ and $\beta$ are bounded-GPAC computable,
    so is $\alpha^\beta$, with time modulus
    $\mu_{\alpha^\beta}(r) =
    \max(\mu_\alpha(r), \mu_\beta(r)) + O(1)$.
    The output variable $z(t) = e^{R(t)}$ is strictly positive
    and does not require dual-railing; by the selective
    dual-rail technique of~\cite{huang2025selective}, only the
    intermediate variables need dual-rail encoding.

  \item \textbf{CRN complexity preservation}
    (\cref{sec:crn-pipeline}).
    The full GPAC-to-CRN pipeline---bounded surrogate
    compilation, dual-rail encoding, and readout---preserves
    time complexity class.  The key step is the readout
    subtraction module, which acts as a low-pass filter:
    it is transparent to all sub-exponential inputs and
    thus preserves every constructible class of
    \cref{sec:constructible}.
\end{enumerate}

\subsection{Related work}
\label{subsec:related}

The GPAC model originates with Shannon~\cite{shannon1941}, with
refinements by Pour-El~\cite{pour-el1974},
Lipshitz--Rubel~\cite{lipshitz1987}, and
Gra\c{c}a--Costa~\cite{graca2003}.  The computability and
complexity theory of polynomial ODEs is due to Bournez,
Gra\c{c}a, and Pouly~\cite{bournez2017,bournez2018}.
CRN computation has been studied under both
stochastic~\cite{soloveichik2010} and
deterministic~\cite{chen2014,chen2023rateindep} semantics.
CRN-computable numbers~\cite{huang2019rtcrn2} and
large-population protocols~\cite{huang2022lpp} connect chemical
systems to computable analysis, with the notable feature that
all species concentrations are bounded.
Anderson and Joshi~\cite{anderson2024} develop modular CRN
arithmetic with input-independent convergence rates, providing
the key tool for our readout analysis
(\cref{sec:crn-pipeline}).
Population protocols~\cite{angluin2007} provide another bounded
setting; their connection to continuous models is established by
Kurtz's theorem.

Since many existing frameworks---CRN-computable
numbers~\cite{huang2019rtcrn2}, population
protocols~\cite{huang2022lpp}, and the dual-rail compilation
of~\cite{fages2017,huang2025selective}---already require
bounded species, our bounded surrogate compilation can serve as a
\emph{front end} that brings unbounded GPAC designs into
their scope.  In particular, combined with the balancing
dilation technique of~\cite{huang2022lpp}, our bounded
compilations can be further translated into population
protocols.

\section{Preliminaries}
\label{sec:preliminaries}

We recall the basic definitions of polynomial initial value
problems, the GPAC model, and the Bournez--Gra\c{c}a--Pouly
complexity framework.

\subsection{Polynomial Initial Value Problems}
\label{subsec:pivp}

\begin{definition}[PIVP]
A \emph{polynomial initial value problem} (PIVP) is a system
\[
  y'(t) = p(y(t)), \qquad y(0) = y_0 \in \Q^d,
\]
where $p : \R^d \to \R^d$ is a vector of polynomials with
rational coefficients.  We write $y(t) \in \R^d$ for the
(unique, maximal) solution.
\end{definition}

A PIVP is \emph{bounded} if there exists $M > 0$ such that
$\|y(t)\| \le M$ for all $t \ge 0$.  Equivalently, the
trajectory remains in a compact set.

\begin{proposition}[Bounded concentrations imply bounded
  derivatives]
\label{prop:bc-implies-bd}
If $y' = p(y)$ with $\|y(t)\| \le M$ for all $t \ge 0$, then
$\|y'(t)\| \le C(p, M)$, where $C$ depends only on~$p$
and~$M$.
\end{proposition}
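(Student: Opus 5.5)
The plan is to use continuity of the polynomial vector field on a compact set. Since $y'(t) = p(y(t))$ holds pointwise along the solution, it suffices to bound $\|p(z)\|$ uniformly over all $z$ in the closed ball $B_M = \{z \in \R^d : \|z\| \le M\}$. By hypothesis the trajectory never leaves $B_M$, so any such bound transfers directly to $\|y'(t)\|$ for every $t \ge 0$.

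The first step is to record that each component $p_i$ is a polynomial, hence continuous, so $p : \R^d \to \R^d$ is continuous. The ball $B_M$ is closed and bounded in $\R^d$, hence compact by Heine--Borel. The map $z \mapsto \|p(z)\|$ is then continuous on a compact set and attains a finite maximum; set $C(p,M) \defeq \max_{z \in B_M} \|p(z)\|$. This quantity depends only on $p$ and $M$, and $\|y'(t)\| = \|p(y(t))\| \le C(p,M)$ follows immediately.

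To make the dependence on $p$ and $M$ explicit rather than purely existential, I would also give a crude closed-form bound. Write $p_i(z) = \sum_{\alpha} c_{i,\alpha} z^\alpha$ over multi-indices with $|\alpha| \le D$, where $D = \deg p$. Using $|z_j| \le \|z\| \le M$ (valid for any of the usual norms, possibly after a constant factor), each monomial satisfies $|z^\alpha| \le \max(1,M)^D$. Hence
\[
  |p_i(z)| \le \Bigl(\sum_{\alpha} |c_{i,\alpha}|\Bigr)\max(1,M)^{D},
\]
and summing over $i$ (or applying norm equivalence in $\R^d$) yields an explicit $C(p,M)$.

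I do not expect a genuine obstacle here. The only point needing care is the choice of norm: the coordinate bound $|z_j| \le \|z\|$ should be justified for the norm in use, or absorbed into a dimension-dependent constant via equivalence of norms on $\R^d$.
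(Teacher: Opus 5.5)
Your proposal is correct and matches the paper's proof, which is the one-line observation that $p$ is continuous on the compact set $\|y\| \le M$, so $\|p\|$ is bounded there. Your explicit monomial bound $\bigl(\sum_\alpha |c_{i,\alpha}|\bigr)\max(1,M)^D$ is a valid optional refinement that the paper does not include.
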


\begin{proof}
$p$ is continuous on the compact set $\|y\| \le M$.
\end{proof}

\subsection{The GPAC Model}
\label{subsec:gpac}

Shannon's General-Purpose Analog Computer (GPAC) generates
functions by composing elementary units: constant multipliers,
adders, integrators, and multipliers.  It is well known that the
class of functions generated by a GPAC coincides with the
solutions of PIVPs~\cite{shannon1941}.  Conversely, any PIVP can
be realized by a GPAC circuit.  We therefore use ``GPAC'' and
``PIVP'' interchangeably throughout.

\subsection{Computability and Complexity via Trajectory Length}
\label{subsec:bgp}

Bournez, Gra\c{c}a, and Pouly~\cite{bournez2017} define analog
computation using a convergence semantics: a PIVP computes a real
number $\alpha$ if a designated output variable $x(t) \to \alpha$
as $t \to \infty$.  They measure complexity by the
\emph{trajectory length}
\[
  L(T) = \int_0^T \|y'(t)\|\, dt,
\]
and show that the class of functions computable by PIVPs of
polynomial trajectory length coincides with the polynomial-time
computable functions of computable analysis.

This framework is elegant but permits unbounded state variables.
Our goal is to enforce boundedness while preserving the complexity
characterization.

\subsection{Bounded-Time Computability}
\label{subsec:bounded-time}

\begin{definition}[Bounded-time computability]
\label{def:bounded-time}
Let $x(t)$ be a designated output variable of a bounded PIVP.
The system \emph{computes} a real number $\alpha$ if there
exists a \emph{time modulus} $\mu : \N \to \td$ such that
\[
  |x(t) - \alpha| < e^{-r}
  \qquad \text{whenever } t > \mu(r).
\]
The \emph{time complexity} of the computation is the asymptotic
growth of $\mu(r)$.
\end{definition}

This aligns with computable
analysis~\cite{ko1991complexity,weihrauch2000computable}:
the precision parameter $r$ is in nats (one nat equals
$1/\log 2 \approx 1.44$ bits), and the polynomial classification
we work with is invariant under this constant rescaling.
Throughout, $\log$ denotes the natural logarithm.

\section{Bounded Compilation Framework}
\label{sec:bounded-compilation}

We present \emph{bounded surrogate compilation},
a method that eliminates unbounded variables from polynomial
ODE systems while preserving all computed limits.

The key idea is a time reparametrization.  Composing the
solution with a new time $t = s(\tau)$ produces the
reparametrized variable $\bar f(\tau) := f(s(\tau))$ and
introduces a factor $s'(\tau)$ into every ODE via the chain
rule.  If $f^n$ is the highest power of the unbounded variable
$f$ in the system, setting $s'(\tau) = 1/(1 + \bar{f}(\tau)^n)$
converts every monomial $f^m \cdot s'(\tau)$ into the bounded
quantity $U_{n,m} = \bar f^m/(1+\bar f^n) \in [0,1]$---a
\emph{surrogate}.  The surrogates satisfy a closed polynomial
ODE: differentiating $U_{n,m}$ by the quotient rule introduces
$d\bar{f}/d\tau$, but this quantity itself decomposes as
$\sum_k h_k\,U_{n,k}$---the rate of change of the unbounded
variable is captured by its own surrogates.  The system thus
closes over itself, with no reference to $f$.

\begin{construction}[Bounded surrogate compilation]
\label{constr:surrogates}
Let $Z = (z_1, \dots, z_{d-1}, f)$ satisfy $\dot{Z} = p(Z)$
with $p \in \Q[Z]^d$, where $f$ is the unbounded variable to
eliminate.  Let $n$ be the highest degree of $f$ in any
component of~$p$.  The construction proceeds in four steps.

\medskip\noindent\textbf{Step 1: Time change.}
Define $\bar{f}(\tau) = f(s(\tau))$ and
$\bar{z}_j(\tau) = z_j(s(\tau))$, where $t = s(\tau)$
satisfies $s'(\tau) = 1/(1+\bar{f}(\tau)^n)$.
Define the bounded surrogates
\[
  U_{n,m}(\tau) = \frac{\bar{f}(\tau)^m}{1+\bar{f}(\tau)^n}
  \in [0,1], \quad m = 0, \ldots, n,
  \qquad U_{n,0}+U_{n,n}=1.
\]

\medskip\noindent\textbf{Step 2: Rewrite $\bar{z}_j$ equations.}
Order each $p_j$ by powers of $\bar{f}$:
$p_j(\bZ) = \sum_{m=0}^{n} g_{j,m}(\bar{z})\,\bar{f}^m$,
where $g_{j,m}$ is a polynomial in variables other than $f$.
The chain rule and $s'(\tau) = 1/(1+\bar{f}^n)$ give:
\[
  \frac{d\bar{z}_j}{d\tau}
  = \sum_{m=0}^{n} g_{j,m}(\bar{z})
  \cdot \underbrace{\frac{\bar{f}^m}{1+\bar{f}^n}}_{U_{n,m}}.
\]

\medskip\noindent\textbf{Step 3: Derive surrogates' ODE.}
By the quotient rule,
\[
  \frac{dU_{n,m}}{d\tau}
  = \bigl(m\,U_{n,m-1} - n\,U_{n,n-1}\,U_{n,m}\bigr)
  \cdot \frac{d\bar{f}}{d\tau}.
\]
To express $d\bar{f}/d\tau$: order
$p_d(\bZ) = \sum_m h_m(\bar{z})\,\bar{f}^m$ and apply
$s'(\tau) = U_{n,0}$:
\[
  \frac{d\bar{f}}{d\tau}
  = p_d(\bZ) \cdot s'(\tau)
  = \sum_m h_m(\bar{z})
  \cdot \underbrace{\frac{\bar{f}^m}{1+\bar{f}^n}}_{U_{n,m}}
  = \sum_m h_m \cdot U_{n,m}.
\]
Substituting:
\begin{equation}
\label{eq:surrogate-ode}
\boxed{\;
  \frac{dU_{n,m}}{d\tau} =
  \bigl(m\,U_{n,m-1} - n\,U_{n,n-1}\,U_{n,m}\bigr)
  \cdot \sum_k h_k(\bar{z})\,U_{n,k}.
\;}
\end{equation}

\end{construction}

The compiled system is a polynomial ODE in the variables
$(\bar{z}_1, \ldots, \bar{z}_{d-1},
U_{n,0}, \ldots, U_{n,n})$.
The unbounded variable $\bar{f}$ no longer appears: it
has been fully absorbed by the bounded surrogates.
The original time $t = s(\tau)$ is also absent from the
compiled system.
If any of $\bar{z}_1, \ldots, \bar{z}_{d-1}$ remain
unbounded, apply the construction again to eliminate the
next unbounded variable, and repeat until all variables
are bounded.

\begin{proposition}[Limit preservation]
\label{prop:dcr-limit}
Let $T^* \le \infty$ be the maximal existence time of the
original system.  If $\lim_{t \to {T^*}^-} z_j(t) = L$, then
$\lim_{\tau \to \infty} \bar{z}_j(\tau) = L$.
\end{proposition}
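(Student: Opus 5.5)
The plan is to show that the time change $s$ is an increasing bijection from $[0,\infty)$ onto $[0,T^*)$. Once that holds, $\bar{z}_j = z_j\circ s$ is a composition with a reparametrization that exhausts the original domain, and the limit transfers directly.

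First, define $s$ through its inverse rather than through the implicit ODE. On $[0,T^*)$ the solution $f$ is continuous and finite, so $1+f(t)^n$ is continuous. For even $n$ it is at least $1$; for odd $n$ I assume the standing positivity convention $f\ge 0$, which the surrogates $U_{n,m}\in[0,1]$ already presuppose. Set
\[
  \sigma(t) \defeq \int_0^t \bigl(1+f(u)^n\bigr)\,du, \qquad 0\le t<T^*.
\]
Then $\sigma$ is $C^1$ with $\sigma'(t)=1+f(t)^n\ge 1>0$, so it is strictly increasing. Let $s\defeq\sigma^{-1}$, defined on $[0,\sigma(T^{*-}))$. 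By the inverse function rule, $s'(\tau)=1/(1+f(s(\tau))^n)=1/(1+\bar f(\tau)^n)$ and $s(0)=0$, so this is exactly the time change of Step~1 of \cref{constr:surrogates}. By uniqueness for the locally Lipschitz compiled polynomial system, the compiled trajectory coincides with $(\bar z,\,U_{n,m}(\bar f))$ on this common interval.

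The key step is to show $\sigma(T^{*-})=\infty$, so that $\tau$ ranges over all of $[0,\infty)$ and $s(\tau)\to T^*$ as $\tau\to\infty$. I split into two cases.
\begin{enumerate}[(a)]
  \item If $T^*=\infty$, then $\sigma(t)\ge t\to\infty$.
  \item If $T^*<\infty$, the solution blows up, so $\|Z(t)\|\to\infty$ as $t\to T^{*-}$. If the blow-up occurs in $f$ and $\int^{T^*} f^n$ diverges, we are done. In general this is the main obstacle, because the integral $\int_0^{T^*}(1+f^n)\,dt$ could be finite even though some variable diverges.
\end{enumerate}
To handle case (b), I use the compiled system itself. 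Its right-hand side is polynomial in $(\bar z, U)$, and the $U$'s stay in $[0,1]$. Suppose $\sigma(T^{*-})=\tau^*<\infty$. If $\bar z$ stays bounded on $[0,\tau^*)$, then by the standard extension theorem the compiled solution extends past $\tau^*$. Pulling back through $\sigma$ would then let the original solution extend past $T^*$, contradicting maximality. The pull-back is possible because $U_{n,0}>0$ stays bounded away from $0$ near $\tau^*$: otherwise $\bar f\to\infty$, and then $d\sigma/dt=1+f^n$ is unbounded, which must be checked against finiteness of $\tau^*$.

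Alternatively, the statement can be read as the paper intends, with compilation applied iteratively until all variables are bounded. In that setting, the final compiled system is bounded with polynomial right-hand side, so it exists for all $\tau\ge0$. Hence $s$ is defined on $[0,\infty)$ and increasing, and its limit $S=\lim_{\tau\to\infty}s(\tau)\le T^*$ exists. If $S<T^*$, the original solution is continuous on $[0,S]$, so $f$ is bounded there. Then $s'\ge c>0$, which forces $s\to\infty$, a contradiction. Hence $S=T^*$. This argument avoids the blow-up analysis entirely, so I would adopt it as the main route: global existence of the bounded compiled system, together with $s'\ge 1/(1+\max_{[0,S]}f^n)$, gives $s(\tau)\to T^*$.

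Finally, given $s(\tau)\to T^{*-}$ monotonically, for every $\varepsilon>0$ choose $t_0<T^*$ with $|z_j(t)-L|<\varepsilon$ for all $t\in(t_0,T^*)$. Then choose $\tau_0$ with $s(\tau_0)>t_0$. For all $\tau>\tau_0$ we get $|\bar z_j(\tau)-L|=|z_j(s(\tau))-L|<\varepsilon$, which proves the claim.
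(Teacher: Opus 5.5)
\textbf{There is a genuine gap.} You correctly reduce everything to showing $\sigma(T^{*-})=\infty$, and you correctly flag this as the obstacle in case~(b). But the route you adopt as ``main'' does not get around it. Global existence of the bounded compiled ODE only says that the compiled trajectory is defined for all $\tau\ge 0$. It agrees with $(\bar z, U_{n,\cdot}(\bar f))$ only on $[0,\tau^\dagger)$, where $\tau^\dagger:=\sigma(T^{*-})$; the compiled system knows nothing about $T^*$. So ``hence $s$ is defined on $[0,\infty)$'' is a non sequitur, and so is the identity $\bar z_j(\tau)=z_j(s(\tau))$ for all $\tau>\tau_0$ in your last step. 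If $\tau^\dagger<\infty$, your $S<T^*$ argument is vacuous, and you have only shown $\lim_{\tau\to\tau^\dagger}\bar z_j=L$.

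This is not a technicality: with another variable still unbounded, the conclusion really fails. Take $z'=z^2$, $f'=z$, $x'=-xf$ with $z(0)=x(0)=1$, $f(0)=0$. Then $T^*=1$, $f=-\log(1-t)$, $x\to e^{-1}$, $n=1$, and $\sigma(1^-)=\int_0^1(1-\log(1-t))\,dt=2$. So the compiled system blows up (through $\bar z$) at $\tau=2$. Any proof must therefore use boundedness of the remaining variables to establish $\sigma(T^{*-})=\infty$ itself, not merely to get global existence. The paper's own proof asserts this same step (``forcing $s'(\tau)\to 0$; \dots\ the approach takes $\tau\to\infty$'') without justification. As the example shows, $s'\to 0$ is compatible with $\int_0^{T^*}(1+f^n)\,dt<\infty$, so you cannot borrow the step from there.

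\textbf{The fix.} The missing idea is the Step~3 identity $d\bar f/d\tau=\sum_k h_k(\bar z)\,U_{n,k}$. When the remaining $\bar z$ are bounded (your fully compiled setting) and $U_{n,k}\in[0,1]$, this derivative is bounded by some constant $C$ on $[0,\tau^\dagger)$, so $|\bar f(\tau)|\le |f(0)|+C\tau$. Now suppose $\tau^\dagger<\infty$. Then $f$ is bounded on $[0,T^*)$; this is exactly the ``$U_{n,0}$ stays away from $0$'' that you left unchecked in case~(b). The whole original trajectory then stays in a compact set on $[0,T^*)$, and $T^*\le\tau^\dagger<\infty$ because $\sigma(t)\ge t$. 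This contradicts maximality of $T^*$. Hence $\tau^\dagger=\infty$, $s$ is an increasing bijection $[0,\infty)\to[0,T^*)$ as you planned, and your final $\varepsilon$-step finishes the proof.

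Your case~(a), $\sigma(t)\ge t$, is the correct direction. The paper's $T^*=\infty$ case uses $s(\tau)\le\tau$, which bounds $s$ the wrong way. Your $S<T^*$ argument is the one the paper uses for Theorem~\ref{thm:pole-computable}, but it becomes redundant once $\tau^\dagger=\infty$ is known.

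In the iterated setting, apply the bound at the last compilation step. At each earlier step, use the same bound to show that $[0,\sigma(T^{*-}))$ is exactly the maximal existence interval of that step's compiled system, so that limits transfer step by step.
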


\begin{proof}
Since $s'(\tau) = 1/(1+\bar{f}^n) > 0$, the function
$t = s(\tau)$ is strictly increasing.  If $T^* = \infty$
(no blowup), then $s(\tau) \le \tau \to \infty$.  If
$T^* < \infty$ (finite-time blowup), then
$\bar{f}(\tau) \to \infty$ as $s(\tau) \to {T^*}^-$, forcing
$s'(\tau) \to 0$; thus $s(\tau) \to T^*$ but never reaches
it, and the approach takes $\tau \to \infty$.  In both
cases, $\bar{z}_j(\tau) = z_j(s(\tau)) \to L$.
\end{proof}

Applying \cref{constr:surrogates} iteratively to each
unbounded variable and invoking \cref{prop:dcr-limit} at
each step, we obtain:

\begin{theorem}[Bounded GPACs compute the same real numbers]
\label{thm:same-class}
Every real number computable by a (possibly unbounded) GPAC is
also computable by a bounded GPAC.
\end{theorem}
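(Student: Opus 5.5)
The argument is an induction on the number of unbounded variables, applying \cref{constr:surrogates} once per variable and \cref{prop:dcr-limit} once per step. Start from a PIVP $\dot Z = p(Z)$, $Z(0)=Z_0\in\Q^d$, whose output $x$ converges to $\alpha$ as $t\to T^*$. Here $T^*=\infty$ in the usual convergence semantics, and the finite-blowup case is covered by \cref{prop:dcr-limit}. Call a coordinate \emph{unbounded} if it is not bounded on $[0,T^*)$.

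If no coordinate is unbounded, there is nothing to do. Otherwise, pick an unbounded coordinate $f$ other than $x$. (If $x$ itself were unbounded it could not converge, so $x$ is never chosen.) Apply \cref{constr:surrogates} to $f$. This produces a PIVP in $(\bar z_1,\dots,\bar z_{d-1},U_{n,0},\dots,U_{n,n})$ with the following properties.
\begin{enumerate}
\item It has rational coefficients, because the $g_{j,m}$ and $h_k$ are obtained by regrouping $p$.
\item It has rational initial values $U_{n,m}(0)=f_0^m/(1+f_0^n)$, since $f_0\in\Q$.
\item Each surrogate lies in $[0,1]$, because $\bar f^m\le 1+\bar f^n$ for $0\le m\le n$. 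This holds for $\bar f\ge0$, and also for negative $\bar f$ when $n$ is even. In the remaining case I would first shift or square $f$, for example by introducing $f^2$ as a variable through its own polynomial ODE, so that the highest degree is even.
\end{enumerate}

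I also need the compiled system to reproduce the reparametrized trajectory exactly. The functions $\bar z_j$ and $U_{n,m}$ defined from $Z(s(\tau))$ satisfy the compiled ODE by the computations in Steps 2--3. By uniqueness of PIVP solutions, the compiled system's solution is therefore exactly this reparametrized trajectory, for as long as $s(\tau)$ stays in $[0,T^*)$.

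Global existence in $\tau$ then needs a separate check. The surrogates stay in $[0,1]$. Each remaining $\bar z_j$ is $z_j\circ s$, which is finite for every finite $\tau$ because $s(\tau)<T^*$. So the compiled solution exists for all $\tau\ge0$. By \cref{prop:dcr-limit}, $\bar x(\tau)\to\alpha$ as $\tau\to\infty$.

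The bounded coordinates of the old system remain bounded in the new one, because reparametrization does not change the set of values a coordinate takes. Removing $f$ therefore strictly decreases the number of unbounded coordinates (the surrogates are bounded by construction). After at most $d-1$ iterations, every coordinate is bounded, and the output still converges to $\alpha$. Each new system's limit time is $\infty$, which is where the next application of \cref{prop:dcr-limit} starts.

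I expect the delicate step to be the well-definedness and sign issue for the surrogates. Step 1 needs $1+\bar f^n>0$ and $U_{n,m}\in[0,1]$, and both depend on the sign and parity conditions handled in item 3 above. The fallback is an affine shift $f\mapsto f-c$ when $f$ is bounded below, or a dual-rail split $f=f^+-f^-$ with nonnegative parts when it is not. Each such preprocessing step keeps the system a PIVP with rational data, so the induction goes through unchanged.
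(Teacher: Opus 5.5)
Your proposal is correct and takes essentially the paper's route: iterate \cref{constr:surrogates} once per unbounded coordinate and invoke \cref{prop:dcr-limit} at each step, while making explicit the rational data, the identification of the compiled solution via uniqueness, and the sign problem for $1+\bar f^{\,n}$, all of which the paper leaves implicit. For the sign problem the simplest patch is to use any even exponent $N\ge n$ in $s'=1/(1+\bar f^{\,N})$, since Steps~2--3 only need $N$ to dominate the degree of $f$ and the surrogates then lie in $[-1,1]$, which suffices; separately, in the finite-blowup case $\tau\to\infty$ is not automatic, because $\int_0^{T^*}(1+f^{N})\,dt$ can be finite (e.g.\ if $f$ grows only logarithmically while another coordinate drives the blowup), but the compiled maximal solution is still exactly the reparametrized one and the final fully bounded PIVP cannot blow up, so it is global and your conclusion still holds.
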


In other words, unboundedness does not enlarge the class of
GPAC-computable real numbers.

\subsection{Worked example: finite-time explosion}
\label{subsec:worked-example}

Consider $x' = (1-x)(1+v)^2$, $v' = (1+v)^2$,
$x(0) = v(0) = 0$.  The auxiliary variable
$v(t) = 1/(1-t) - 1$ explodes at $t=1$, while the output
$x(t) = 1 - e^{1-1/(1-t)}$ converges to~$1$ as
$t \to 1^-$.  Since $1+v(t) = 1/(1-t)$, the arc length is
$L(T) \ge T/(1-T) \to \infty$; the cost for $r$ bits of
precision is $\Theta(r)$---no better than $z' = 1-z$.
Applying \cref{constr:surrogates} with $n=2$, the surrogates
$U_{2,0} = 1/(1+\bar{v}^2)$ and
$U_{2,1} = \bar{v}/(1+\bar{v}^2)$ yield the bounded system:
\begin{equation}
\label{eq:worked-compiled}
\boxed{
\begin{aligned}
  \bar{x}'   &= (1 - \bar{x})(1 + 2U_{2,1}), \\
  U_{2,1}' &= (U_{2,0} + U_{2,1})^2(2U_{2,0} - 1), \\
  U_{2,0}' &= -2(U_{2,0} + U_{2,1})^2 U_{2,1},
\end{aligned}
}
\qquad
\bar x(0) = 0,\ U_{2,0}(0) = 1,\ U_{2,1}(0) = 0,
\end{equation}
with all variables in $[0,1]$ for all $\tau \ge 0$. The variable
$v$ has been eliminated; the finite-time blowup is a disguise, not
a speedup.

\subsection{Application: computing $\pi$ via pole compilation}
\label{subsec:pi-pole}

Bounded surrogate compilation reveals a duality: \emph{finite-time singularities in
unbounded systems become infinite-time limits in bounded
systems.}  We illustrate by computing $\pi/2$ from the pole
of $\tan(t)$.

The tangent function satisfies $x' = 1 + x^2$, $x(0) = 0$,
with pole at $t = \pi/2$.  Applying bounded surrogate compilation
with $n = 2$ and introducing surrogates
$U_0 = 1/(1+\bar{x}^2)$, $U_1 = \bar{x}/(1+\bar{x}^2)$
(where $\bar{x}(\tau) = x(s(\tau))$; in this example $\bar{x}(\tau) = \tau$):
\begin{equation}\label{eq:pi-system}
\boxed{\;
  U_0' = -2\,U_0\,U_1, \quad
  U_1' = U_0(2U_0 - 1).
\;}
\qquad U_0(0) = 1,\ U_1(0) = 0.
\end{equation}
All variables bounded.  The physical time $t(\tau) =
\arctan(\tau) \to \pi/2$: the pole location becomes the
computed output, readable from the surrogates
(\cref{fig:pi-tan}).

\begin{figure}[H]
\centering
\includegraphics[width=0.7\linewidth]{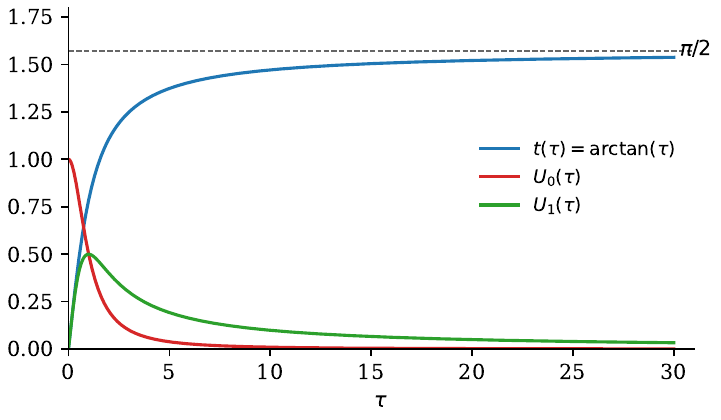}
\caption{Time evolution of the compiled bounded system
\eqref{eq:pi-system} in the new time variable $\tau$. The
physical time $t(\tau) = \arctan(\tau)$ converges to $\pi/2$ as
$\tau \to \infty$ (dashed line). The surrogates $U_0, U_1$
remain in $[0, 1]$ throughout: the pole of $\tan(t)$ at
$t = \pi/2$ has been compactified into an infinite-time limit
in $\tau$.}
\label{fig:pi-tan}
\end{figure}

The error $\pi/2 - t(\tau) \sim 1/\tau$ gives time modulus
$\mu(r) = \Theta(e^r)$.  This is not optimal for computing
$\pi$---specialized constructions achieve
$\mu(r) = \Theta(r)$~\cite{huang2019rtcrn2}---but the method
is fully generic, as the following theorem shows.

\begin{theorem}[Poles of GPAC-implementable functions are GPAC-computable]
\label{thm:pole-computable}
Let $y' = p(y)$, $y(0) = y_0 \in \Q^d$ be a PIVP with
finite maximal existence time $T^* < \infty$.
Then $T^*$ is computable by a bounded PIVP.
\end{theorem}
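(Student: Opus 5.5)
The plan is to turn the original time itself into a state variable and then apply the bounded surrogate compilation of \cref{constr:surrogates}, mirroring the $\tan$ example of \cref{subsec:pi-pole}. First, augment the PIVP with a clock $c' = 1$, $c(0) = 0$. This is still a PIVP with rational data, and its solution is $c(t) = t$ on $[0, T^*)$. The goal is then to make the reparametrized clock $\bar c(\tau) = s(\tau)$ into a bounded state variable whose limit is $T^*$.

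Second, eliminate the unbounded coordinates of $y$. Since $T^* < \infty$, the standard blow-up alternative gives $\|y(t)\| \to \infty$ as $t \to T^{*-}$. The complication is that it need not be a single coordinate that blows up. To handle this, I would introduce one scalar $f = \|y\|^2 = \sum_i y_i^2$, which satisfies the polynomial equation $f' = 2\sum_i y_i p_i(y)$. I would then apply a single time change $s'(\tau) = 1/(1+\bar f^{\,n})$, with $n$ chosen large enough to dominate the degrees of all right-hand sides expressed in terms of $f$.

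The point I expect to be the \textbf{main obstacle} is closure. The individual $y_i$ are not polynomials in $f$, so Step~2 of \cref{constr:surrogates} does not apply verbatim. What I would try first is to use bounded surrogates of the form $w_\beta = \bar y^{\beta}/(1+\bar f)^{k}$ for multi-indices $|\beta| \le 2k$. These lie in a bounded range because $|y^\beta| \le (1+f)^{|\beta|/2}$. Their derivatives under the time change $s' = (1+\bar f)^{-k}$, for $k$ exceeding the degree of $p$, are again polynomials in the $w$'s, by the same quotient-rule calculation that gives \eqref{eq:surrogate-ode}. The clock equation becomes $d\bar c/d\tau = s'(\tau) = w_0$, which is one of the surrogates.

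If closure fails for this family, the fallback is to iterate \cref{constr:surrogates} coordinate by coordinate, as the paper suggests after the construction. I would take care to check that each stage keeps the previous surrogates bounded.

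Finally, $\bar c(\tau) = s(\tau)$ is increasing and satisfies $\bar c \le T^*$, so it is bounded. By the argument in \cref{prop:dcr-limit}, $s(\tau) \to T^*$ as $\tau \to \infty$: if $s$ reached a limit $T' < T^*$, then $\bar f$ would stay bounded, hence $s'$ would be bounded below and $s \to \infty$, a contradiction. So the compiled bounded PIVP, with output variable $\bar c$ and initial value $\bar c(0) = 0$, has all variables bounded and rational initial data, and it computes $T^*$ in the sense of the limit semantics. No rate is claimed; the statement only asserts computability, consistent with the $\Theta(e^r)$ modulus seen in the $\pi/2$ example.
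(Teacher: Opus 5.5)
Your proof is correct, but its compilation step differs from the paper's. The paper runs \cref{constr:surrogates} coordinate by coordinate until every variable is bounded, adjoins $s' = U_{n,0}$, and then uses the same monotone-limit contradiction you give. You instead compactify all coordinates at once, using $F = 1+\|y\|^2$ and the surrogates $w_\beta = \bar y^\beta/F^k$ ($|\beta|\le 2k$) under a single time change $s' = F^{-k} = w_0$. The closure you worried about does hold. We have $dw_\beta/d\tau = (\partial_t y^\beta)F^{-2k} - k\,y^\beta \dot F\,F^{-2k}\,F^{-1}$, where $F^{-1} = F^{k-1}/F^k$ is a linear combination of the $w_\gamma$. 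Every $y^\gamma/F^{2k}$ with $|\gamma|\le 4k$ factors as $w_{\gamma_1}w_{\gamma_2}$. So $2k \ge \deg p + 1$ suffices, and the coordinate-wise fallback is unnecessary. Your route handles simultaneous blow-up of several coordinates in one step and gives $|w_\beta|\le 1$ for free. It also makes the physical clock literally $\bar c' = w_0$; under iterated compilation $dt/d\tau$ is a product of the successive $U_{n_i,0}$, which the paper's proof compresses into a single $U_{n,0}$. The paper's route instead reuses \cref{constr:surrogates} as a black box and adds only $n+1$ variables per eliminated coordinate, whereas yours needs $\binom{d+2k}{d}$ surrogates. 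You and the paper both leave one step implicit, and you should state it: the reparametrized trajectory must occupy all of $\tau\in[0,\infty)$, i.e.\ $\int_0^{T^*}F(y(t))^k\,dt=\infty$. Otherwise $\bar c$ reaches $T^*$ at a finite $\tau$ and the bounded compiled system simply continues past it. With your $k$ this takes one line. Since $|dF/d\tau| = F^{-k}\,\bigl|2\sum_i y_i p_i(y)\bigr| \le C$, $F$ grows at most linearly in $\tau$. A finite $\tau^*$ would then keep $y$ bounded on $[0,T^*)$, contradicting maximality of $T^*$.
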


\begin{proof}
Apply bounded surrogate compilation (\cref{constr:surrogates}) to
eliminate all unbounded variables, obtaining a bounded system in
surrogate variables with time parameter~$\tau$. The physical time
satisfies $s'(\tau) = U_{n,0}(\tau)$, $s(0) = 0$, where
$U_{n,0} = 1/(1+\bar{f}^n) \in (0, 1]$.

Adjoin $s$ as an additional variable to the compiled system. This
adds one equation $s' = U_{n,0}$ to a bounded polynomial ODE, and
$s(\tau) < T^*$ for all $\tau$, so the augmented system remains
bounded.

We claim $s(\tau) \to T^*$ as $\tau \to \infty$. Since $s$ is
increasing and bounded above by $T^*$, the limit
$L = \lim_{\tau \to \infty} s(\tau)$ exists. If $L < T^*$, then
$\bar f(\tau) = f(s(\tau)) \to f(L)$, which is finite since the
solution exists on $[0, T^*)$. But then
$s'(\tau) = 1/(1 + \bar f(\tau)^n) \to 1/(1 + f(L)^n) > 0$,
contradicting the convergence of $s(\tau)$ to a finite limit.
Hence $L = T^*$.

Therefore $s(\tau) \to T^*$ in a bounded PIVP, and $T^*$ is
GPAC-computable.
\end{proof}

\section{Bounded Analog Complexity}
\label{sec:complexity}

The bounded surrogate compilation of \cref{sec:bounded-compilation}
shows that every GPAC or CRN computation can be transformed into
one whose state variables remain bounded, without loss of
generality.  Physical time~$t$ is then the only diverging
resource, and we can use it to measure computational cost
directly.  With the time modulus of \cref{def:bounded-time}
in hand, we now develop a complexity theory for bounded analog
systems.

The complexity studied here is for analog computation of
\emph{real numbers} under the limit-readout convention of
\cref{def:bounded-time}; we do not encode Turing-machine
computation in bounded CRNs, nor do we discuss decision
problems under such setting in this work.

\medskip\noindent\textbf{Terminological convention.}
A standard but important inversion relative to dynamical-systems
terminology:
\emph{exponential convergence} ($O(e^{-kt})$) corresponds to
\emph{linear time} ($\mu(r) = \Theta(r)$), while
\emph{polynomial convergence} ($O(t^{-k})$) corresponds to
\emph{exponential time} ($\mu(r) = \Theta(e^{r/k})$).
Throughout, ``polynomial time'' refers to the
computable-analysis notion: $\mu(r) = O(r^k)$.

\subsection{One-Sided Trajectory--Time Bound on Compact Domains}
\label{subsec:time-length}

\begin{lemma}[One-sided bound]
\label{thm:time-length}
Let $y'(t) = p(y(t))$ be a PIVP whose trajectory remains in a
compact domain $D$. Then the polynomial vector field $p$
satisfies $\|p(y)\| \le v_{\max}$ on $D$ for some
$v_{\max} < \infty$, hence
\[
  L(t) \;=\; \int_0^t \|y'(\tau)\|\, d\tau \;\le\; v_{\max}\, t.
\]
\end{lemma}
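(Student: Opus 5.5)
The argument has two parts: a compactness step that bounds the speed, and an integration step. It is essentially the same reasoning as in \cref{prop:bc-implies-bd}, applied to a general compact domain and followed by integration.

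\textbf{Step 1 (uniform speed bound).} Each component $p_i$ is a polynomial, so $p:\R^d\to\R^d$ is continuous. Any norm on $\R^d$ is continuous, so the map $y\mapsto\|p(y)\|$ is also continuous. A continuous real-valued function attains its maximum on the compact set $D$. We therefore set
\[
  v_{\max} \defeq \max_{y\in D}\|p(y)\| < \infty .
\]
If we want an explicit constant, we can instead take $D\subseteq\{\|y\|\le M\}$ and bound each monomial by $|c|M^{\deg}$. The resulting crude bound depends only on the coefficients of $p$ and on $M$.

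\textbf{Step 2 (integration).} By hypothesis $y(\tau)\in D$ for every $\tau\in[0,t]$, and the solution satisfies $y'(\tau)=p(y(\tau))$. Hence $\|y'(\tau)\|\le v_{\max}$ pointwise. The solution of a PIVP is smooth (indeed analytic) on its interval of existence, so $\tau\mapsto\|y'(\tau)\|$ is continuous and the integral defining $L(t)$ exists. Monotonicity of the integral then gives
\[
  L(t)=\int_0^t\|y'(\tau)\|\,d\tau\le\int_0^t v_{\max}\,d\tau=v_{\max}\,t .
\]

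\textbf{Remaining check.} I do not expect any real obstacle. The one point to confirm is that the solution exists on all of $[0,t]$, so that the integral makes sense. This follows from the hypothesis: a trajectory that stays in a compact set cannot blow up in finite time, so by the standard extension theorem for ODEs the maximal solution is global. The bound is one-sided, since $L(t)$ may be much smaller than $v_{\max}t$, for instance near an equilibrium as in \cref{ex:bfk}. The lemma does not claim a converse, so no lower bound is needed.
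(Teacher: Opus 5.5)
Your proposal is correct and follows essentially the same route as the paper, which simply observes that $\|p\|$ attains a finite maximum $v_{\max}$ on the compact set $D$ and integrates the resulting pointwise speed bound. Your extra check that a trajectory confined to a compact set exists globally, so that $L(t)$ is defined for every $t$, is a sound detail the paper leaves implicit.
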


\begin{proof}
By compactness, $\|p\|$ attains a maximum $v_{\max}$ on $D$. The
bound on $L(t)$ follows directly.
\end{proof}

The converse one-sided inequality $L(t) \ge v_{\min}\, t$ requires
a non-degeneracy hypothesis $\|p(y(t))\| \ge v_{\min} > 0$ which is
incompatible with convergence to a fixed point and therefore fails
for the very systems described by \cref{def:bounded-time}. We do
not assume it; only the upper bound of \cref{thm:time-length} is
used below.

\begin{remark}[Bounded concentrations vs.\ bounded derivatives]
\label{rmk:bounded-deriv}
Bounded concentrations imply bounded derivatives (compactness of the
state space), but not conversely: $x' = 1$ has bounded derivative
yet $x(t) \to \infty$. A system with bounded derivatives admits the
one-sided bound $L(t) \le M t$ but is not automatically bounded;
\cref{constr:surrogates} compiles it into a bounded-concentration
system without changing the computed object.
\end{remark}

\subsection{Iterative Compilation Preserves Polynomial Arc Length}
\label{subsec:compilation-thm}

The proof of bounded compilation is by induction on the number of
unbounded variables eliminated. The tracked resource through each
compilation step is the trajectory arc length $L$. The base case
is the BGP polynomial trajectory length of the original system. The
inductive step (\cref{lem:one-step}) shows that one compilation
iteration sends a polynomial-arc-length system to another
polynomial-arc-length system. When the iteration terminates with
all variables bounded, \cref{thm:time-length} converts polynomial
arc length to polynomial physical time directly. The converse
direction is immediate: any bounded PIVP whose physical time is
polynomial has polynomial arc length by \cref{thm:time-length}.

\begin{lemma}[Polynomial state bound from polynomial arc length]
\label{lem:state-bound}
Let $y'(t) = p(y(t))$ be a PIVP for which there is a polynomial $q$
such that $L(t_r) \le q(r)$ for every precision $r$, where $t_r$ is
the physical time at which the output reaches precision $r$
(\cref{def:bounded-time}). Then for every component $y_i$ and every
$s \in [0, t_r]$,
\[
  |y_i(s)| \;\le\; |y_i(0)| + q(r).
\]
\end{lemma}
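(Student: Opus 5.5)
The bound is just the fundamental theorem of calculus combined with monotonicity of arc length; there is no need to use the polynomial structure of $p$ beyond the existence of a $C^1$ solution. For each component $y_i$ and each $s \in [0, t_r]$, I will write
\[
  y_i(s) - y_i(0) \;=\; \int_0^s y_i'(\sigma)\,d\sigma,
\]
which is legitimate because $y$ is $C^1$ on $[0, t_r]$. Since $t_r$ is by hypothesis a time at which the solution exists, it lies inside the maximal interval of existence.

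Taking absolute values gives
\[
  |y_i(s) - y_i(0)| \;\le\; \int_0^s |y_i'(\sigma)|\,d\sigma \;\le\; \int_0^s \|y'(\sigma)\|\,d\sigma \;=\; L(s).
\]
The middle inequality is the coordinatewise bound $|v_i| \le \|v\|$, which holds for the Euclidean norm and for any $\ell^p$ norm. For a general norm it holds only up to a constant factor, which would merely rescale $q$; I will note this in passing but work with the Euclidean norm used in the definition of $L$.

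Next I use that $L$ is nondecreasing, because its integrand is nonnegative. Hence $L(s) \le L(t_r) \le q(r)$ for all $s \le t_r$. The triangle inequality $|y_i(s)| \le |y_i(0)| + |y_i(s) - y_i(0)|$ then yields the claimed bound.

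The only point needing care, rather than any real obstacle, is making sure $t_r$ is well defined and lies in the existence interval, so that the integral makes sense. I will read the hypothesis as asserting that $t_r$ is a (finite) time at which the output has reached precision $r$; the required inequality $L(t_r) \le q(r)$ then implicitly guarantees finiteness. If the output never reaches precision $r$, the statement is vacuous for that $r$.
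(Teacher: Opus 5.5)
Your proposal is correct and follows essentially the same argument as the paper: you bound $|y_i(s)-y_i(0)|$ by $\int_0^s |y_i'(u)|\,du \le \int_0^s \|p(y(u))\|\,du = L(s)$, use monotonicity of $L$ to get $L(s) \le L(t_r) \le q(r)$, and finish with the triangle inequality. Your side remarks on the choice of norm and on $t_r$ lying in the existence interval are sensible details that the paper leaves implicit.
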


\begin{proof}
$|y_i(s) - y_i(0)| \le \int_0^s |y_i'(u)|\, du
\le \int_0^s \|p(y(u))\|\, du = L(s) \le L(t_r) \le q(r)$.
\end{proof}

\begin{lemma}[One compilation step preserves polynomial arc length]
\label{lem:one-step}
Let $S_{\rm pre}$ be a PIVP whose trajectory length and physical
time both satisfy $L_{\rm pre}(t_r), t_r \le q(r)$ for some
polynomial $q$ and every precision $r$. Apply bounded surrogate
compilation (\cref{constr:surrogates}) to eliminate one unbounded
variable $f$ of maximal degree $n$ in $p$. Then the
post-compilation system $S_{\rm post}$, with new time variable
$\tau$, satisfies
\[
  \tau_r \;\le\; Q_t(r)
  \qquad \text{and} \qquad
  L_{\rm post}(\tau_r) \;\le\; Q_L(r)
\]
for some polynomials $Q_t, Q_L$ determined by $q$ and $n$.
\end{lemma}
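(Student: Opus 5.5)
Two quantities have to be controlled: the new time $\tau_r$ at which the compiled system reaches precision $r$, and the compiled arc length $L_{\rm post}(\tau_r)$. For the first, I would invert the time change. Since $s'(\tau) = 1/(1+\bar f(\tau)^n)$, we have $d\tau/dt = 1 + f(t)^n$, so
\[
  \tau_r \;=\; \int_0^{t_r} \bigl(1 + f(t)^n\bigr)\, dt .
\]
The output is read as $\bar z_j(\tau) = z_j(s(\tau))$, so it reaches precision $r$ exactly when $s(\tau) \ge t_r$. Strictly, the precision condition in \cref{def:bounded-time} is "for all later times", not "at time $t_r$". Because $s$ is strictly increasing, the two conditions transfer to each other, as in \cref{prop:dcr-limit}.

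By \cref{lem:state-bound}, $|f(t)| \le |f(0)| + q(r)$ on $[0,t_r]$. Combining this with $t_r \le q(r)$ gives
\[
  \tau_r \;\le\; q(r)\bigl(1 + (|f(0)|+q(r))^n\bigr) \;=:\; Q_t(r),
\]
which is a polynomial in $r$ of degree at most $(n+1)\deg q$.

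For the arc length, I would bound the speed of each compiled coordinate and integrate in $\tau$. There are two kinds of coordinates.

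\textbf{Surviving coordinates $\bar z_j$.} By the chain rule, $d\bar z_j/d\tau = z_j'(s(\tau))\, s'(\tau)$. The change of variables $t = s(\tau)$ then gives
\[
  \int_0^{\tau_r} |d\bar z_j/d\tau|\, d\tau \;=\; \int_0^{t_r} |z_j'(t)|\, dt \;\le\; L_{\rm pre}(t_r) \;\le\; q(r).
\]
So the arc length of the surviving part is invariant under the reparametrization. This is the key observation, and it reflects the reparametrization invariance of arc length noted in \cref{subsec:why-length}.

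\textbf{Surrogates $U_{n,m}$.} These are new variables, and their arc length is not inherited. By \eqref{eq:surrogate-ode},
\[
  \Bigl|\frac{dU_{n,m}}{d\tau}\Bigr| \;\le\; \bigl(m|U_{n,m-1}| + n|U_{n,n-1}U_{n,m}|\bigr)\,\Bigl|\frac{d\bar f}{d\tau}\Bigr| \;\le\; (m+n)\,\Bigl|\frac{d\bar f}{d\tau}\Bigr|,
\]
since every $U_{n,k}$ lies in $[0,1]$. Changing variables again,
\[
  \int_0^{\tau_r} |d\bar f/d\tau|\, d\tau \;=\; \int_0^{t_r} |f'(t)|\,dt \;\le\; q(r).
\]
Summing over $m = 0,\dots,n$ bounds the surrogate arc length by $O(n^2)\,q(r)$. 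I would use the $\ell^1$ norm, or absorb the norm-equivalence constant depending on the dimension. The result is $L_{\rm post}(\tau_r) \le (1 + O(n^2))\, q(r) =: Q_L(r)$.

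I expect the main obstacle to be the time bound rather than the arc-length bound. The estimate on $\tau_r$ uses the crude pointwise state bound from \cref{lem:state-bound} raised to the power $n$. This costs a polynomial of degree multiplied by about $n+1$ at each step, which is fine for a single step.

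The hypothesis needs some care, though. \cref{lem:state-bound} as stated bounds the state on $[0,t_r]$ using $L(t_r) \le q(r)$, and that is exactly what we assume. I would still double-check that the precision time in the compiled system is measured consistently, namely that $\tau_r$ is defined via $s(\tau_r) = t_r$.

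I would also note that the new pair $(Q_t, Q_L)$ satisfies the hypotheses of the lemma again. Taking $\max(Q_t, Q_L)$ as the new $q$ then enables the induction in \cref{thm:compilation}. The degree growth compounds over the number of eliminated variables, but that number is a fixed constant for a given system.
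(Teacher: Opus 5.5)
Your proposal is correct. Your bound on $\tau_r$ is the same as the paper's: $\tau_r=\int_0^{t_r}(1+f^n)\,dt$ combined with the state bound of \cref{lem:state-bound}. The arc-length half takes a different route. The paper bounds the compiled vector field pointwise on the precision-$r$ prefix. The surrogates lie in $[0,1]$ and the $\bar z_j$ are polynomially bounded by \cref{lem:state-bound}, so $g_{j,m}(\bar z)$, $h_k(\bar z)$, and hence $\|p_{\rm post}\|$ are at most some $M_p(r)=\mathrm{poly}(r)$; multiplying by the time bound gives $L_{\rm post}(\tau_r)\le M_p(r)\,Q_t(r)$. You instead use reparametrization invariance of arc length for the surviving coordinates. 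For the surrogates, you note that each $U_{n,m}$ is a function of $\bar f$ alone with $|dU_{n,m}/d\bar f|\le m+n$, so its total variation is at most $(m+n)\int_0^{t_r}|f'|\,dt\le (m+n)L_{\rm pre}(t_r)$. This yields $L_{\rm post}(\tau_r)\le\bigl(1+\tfrac32 n(n+1)\bigr)q(r)$. That bound is linear in $q$ rather than of degree roughly $\deg M_p+(n+1)\deg q$, and it uses neither $\tau_r$ nor the state bound. It also depends only on $q$ and $n$, as the statement asserts, whereas the paper's $M_p$ depends on the coefficient polynomials $g_{j,m},h_k$ of $p$. The paper's argument has the modest advantage of needing nothing beyond boundedness of the new field on a polynomially long $\tau$-interval. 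Yours is sharper: if $L$ and $\tau$ were tracked separately in the induction of \cref{thm:compilation}, the arc-length bound would grow only by constant factors per step. Both arguments use $|U_{n,k}|\le 1$, which tacitly requires $\bar f^{\,n}\ge 0$ along the trajectory; that assumption is built into the paper's construction and is not a defect of your proof.
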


\begin{proof}
By \cref{lem:state-bound}, $|f(s)| \le |f(0)| + q(r)
=: M_f(r) = \mathrm{poly}(r)$ for $s \in [0, t_r]$.

For the new time variable: from $d\tau/dt = 1 + \bar{f}^n$
(\cref{constr:surrogates}, Step~1) and $|f(u)| \le M_f(r)$,
\[
  \tau_r \;=\; \int_0^{t_r} \bigl(1 + f(u)^n\bigr)\, du
        \;\le\; t_r \bigl(1 + M_f(r)^n\bigr)
        \;\le\; q(r) \bigl(1 + M_f(r)^n\bigr)
        \;=\; \mathrm{poly}(r)
        \;=:\; Q_t(r).
\]

For the new arc length: the post-compilation vector field
$p_{\rm post}$ acts on the state $(\bar{z}_1, \ldots, \bar{z}_{d-1},
U_{n,0}, \ldots, U_{n,n})$, where each $U_{n,m} \in [0, 1]$
(\cref{constr:surrogates}, Step~1). By Steps 2 and 3 of the
construction, $d\bar{z}_j/d\tau = \sum_m g_{j,m}(\bar{z}) U_{n,m}$
and $dU_{n,m}/d\tau$ is polynomial in $(\bar{z}, U_{n,\cdot})$.
Since the $U_{n,m}$ lie in $[0, 1]$ and
$|\bar{z}_j(s)| \le |\bar{z}_j(0)| + q(r) = M_z(r) = \mathrm{poly}(r)$
on the precision-$r$ prefix by \cref{lem:state-bound}, every
component of $p_{\rm post}$ is bounded by a polynomial in
$M_z(r) + M_f(r)$, hence $\|p_{\rm post}\| \le M_p(r) =
\mathrm{poly}(r)$. Therefore
\[
  L_{\rm post}(\tau_r) \;=\; \int_0^{\tau_r} \|p_{\rm post}\|\, d\sigma
                       \;\le\; M_p(r)\, \tau_r
                       \;\le\; \mathrm{poly}(r)
                       \;=:\; Q_L(r). \qedhere
\]
\end{proof}

\begin{theorem}[Iterative bounded compilation]
\label{thm:compilation}
Let $y'(t) = p(y(t))$ be a (possibly unbounded) PIVP whose
trajectory length and physical time both satisfy $L(t_r), t_r \le
q(r)$ for some polynomial $q$ and every precision $r$ (BGP
normalization). Iterated bounded surrogate compilation
(\cref{constr:surrogates}) produces a \emph{fully bounded} PIVP
computing the same real number (or function), whose trajectory
length and physical time at precision $r$ are both polynomial in
$r$. The dimensional overhead is polynomial in the maximal monomial
degree of $p$.
\end{theorem}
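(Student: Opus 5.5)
The proof is an induction on the number $k$ of unbounded variables eliminated, with \cref{lem:one-step} as the inductive step and \cref{thm:time-length} as the terminal conversion. Let $S_0$ be the original PIVP. For $i \ge 0$, let $S_{i+1}$ be the system obtained from $S_i$ by one application of \cref{constr:surrogates} to some variable of $S_i$ that is still unbounded.

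The invariant carried along is this: there are polynomials $q_i$ with $t^{(i)}_r \le q_i(r)$ and $L_i(t^{(i)}_r) \le q_i(r)$, where $t^{(i)}_r$ is the time in $S_i$'s own clock at which precision $r$ is reached. The base case $i=0$ is exactly the hypothesis. For the step, \cref{lem:one-step} gives $Q_t, Q_L$, and we set $q_{i+1} = Q_t + Q_L$. Limit preservation at each stage follows from \cref{prop:dcr-limit}. Since $\bar z_j(\tau) = z_j(s(\tau))$ is a pure reparametrization, the precision-$r$ threshold in the new clock is just the preimage $\tau_r = s^{-1}(t_r)$. This is the quantity bounded in \cref{lem:one-step}, so the invariant is well defined for the same output variable.

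The first point needing care is termination. Each compilation removes $f$ and adds the surrogates $U_{n,0},\dots,U_{n,n}$, all of which lie in $[0,1]$ and so never need elimination. The remaining $\bar z_j$ keep their original polynomial structure, only multiplied by surrogates. So the set of candidate unbounded variables shrinks by one each time, and after at most $d$ steps we reach a system $S_k$ with every variable bounded.

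The second point is dimension. A step adds $n+1$ variables, where $n$ is the maximal degree of the eliminated variable in the current field. I have to argue that degrees do not blow up under iteration. The degree in $\bar z_j$ of $g_{j,m}$ and $h_m$ is at most that of $p$. The surrogate equations \eqref{eq:surrogate-ode} involve $h_k(\bar z)$ only linearly, so the degree in each remaining original variable stays bounded by $\deg p$. The surrogate variables themselves are bounded and are never eliminated. Hence each step adds at most $\deg p + 1$ variables, for total dimension at most $d(\deg p + 2)$. This is polynomial (indeed linear) in the maximal monomial degree, as claimed.

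Finally, $S_k$ is bounded, so \cref{thm:time-length} applies and gives $L_k(\tau) \le v_{\max}\tau$. The invariant already gives $\tau^{(k)}_r \le q_k(r)$, so both arc length and physical time at precision $r$ are polynomial. I expect the main obstacle to be justifying that the polynomial $q_{i+1}$ produced by \cref{lem:one-step} is well-defined uniformly. That lemma uses state bounds $M_f(r), M_z(r)$ on the prefix $[0,t_r]$, and in later stages these must come from \cref{lem:state-bound} applied to $S_i$ rather than $S_0$. I will handle this by applying \cref{lem:state-bound} afresh to each $S_i$ using the invariant. The degree of $q_i$ may grow multiplicatively, roughly like $(\deg p)^i$, but since $i \le d$ is fixed independently of $r$, the bound remains polynomial.
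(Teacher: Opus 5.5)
Your proposal is correct and follows essentially the same route as the paper's proof. Both argue by induction on the number of eliminated unbounded variables, with \cref{lem:one-step} as the inductive step, termination because each compilation absorbs one unbounded variable while adding only bounded surrogates, limit preservation via \cref{prop:dcr-limit}, and $n+1$ added surrogates per step. Your extra care makes explicit what the paper takes for granted: you check that the degree of each later-eliminated variable never exceeds $\deg p$, so the paper's $k(n+1)$ overhead count holds uniformly, and you use $Q_t+Q_L$ rather than $\max(Q_t,Q_L)$ so the invariant is literally a polynomial.
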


\begin{proof}
By induction on the number of remaining unbounded variables.

\emph{Base case.} $S_0 := y$ has $L_0(t_r), t_r \le q(r)$ by
hypothesis.

\emph{Inductive step.} Suppose after $i$ compilation iterations the
system $S_i$ satisfies $L_i(\tau^{(i)}_r), \tau^{(i)}_r \le Q^{(i)}(r)$
for some polynomial $Q^{(i)}$, and still contains at least one
unbounded variable $f_{i+1}$. Apply \cref{lem:one-step} to $S_i$
with input bound polynomial $Q^{(i)}$, eliminating $f_{i+1}$. The
output system $S_{i+1}$ satisfies $L_{i+1}, \tau^{(i+1)}_r \le
Q^{(i+1)}(r)$ for the polynomial $Q^{(i+1)} := \max(Q_t, Q_L)$.

\emph{Termination.} The original system has finitely many
unbounded variables $f_1, \ldots, f_k$, each one fully absorbed by
its compilation step (\cref{constr:surrogates}). The induction
therefore terminates after $k$ iterations with $S_k$ fully bounded,
having $L_k, \tau^{(k)}_r \le Q^{(k)}(r) = \mathrm{poly}(r)$. Limit
preservation throughout by \cref{prop:dcr-limit}. Each step adds at
most $n+1$ surrogate variables, so the total dimensional overhead
is at most $k(n+1)$.
\end{proof}

\begin{theorem}[From BGP poly-length to bounded poly-time]
\label{thm:poly-equiv}
If a real number (or function on a compact domain) is computable
by a (possibly unbounded) PIVP in polynomial trajectory length
\emph{and} polynomial physical time in the sense of
Bournez--Gra\c{c}a--Pouly~\cite{bournez2017}, then it is computable
by a bounded PIVP in polynomial physical time
(\cref{def:bounded-time}).
\end{theorem}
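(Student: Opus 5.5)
The plan is to obtain the statement as a direct corollary of \cref{thm:compilation}, combined with \cref{thm:time-length} to turn the arc-length bound into a time bound. Suppose $\alpha$ (or $f$ on a compact domain) is computed by a PIVP $y'=p(y)$ with polynomial trajectory length and polynomial physical time in the BGP sense. First I fix the normalization: for every precision $r$ there is a time $t_r$ with $|x(t)-\alpha|<e^{-r}$ for all $t>t_r$, and $L(t_r),\,t_r\le q(r)$ for one polynomial $q$. This is the hypothesis of \cref{thm:compilation}. If BGP state precision in bits, or bound the length needed for precision $r$ instead of bounding $t_r$, I reconcile the two by rescaling $r$ by a constant factor (the bits/nats conversion) and, if needed, by enlarging $q$. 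Both changes keep $q$ polynomial.

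Next I apply \cref{thm:compilation}. Iterated bounded surrogate compilation produces a fully bounded PIVP $S_k$ whose output converges to the same $\alpha$, by \cref{prop:dcr-limit} applied at each step. Its time $\tau^{(k)}_r$ to reach precision $r$ satisfies $\tau^{(k)}_r\le Q^{(k)}(r)$ with $Q^{(k)}$ a polynomial.

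It remains to check that this $\tau^{(k)}_r$ is a valid time modulus in the sense of \cref{def:bounded-time}. The requirement is that the precision holds for \emph{all} later times, not only at one instant. Each reparametrization $s$ is strictly increasing, and the compiled output is the original output composed with the composite reparametrization. So $\bar x(\tau)=x(S(\tau))$, where $S=s_1\circ\cdots\circ s_k$ is increasing. Hence $\tau>\tau^{(k)}_r$ implies $S(\tau)>t_r$, which gives $|\bar x(\tau)-\alpha|<e^{-r}$. Setting $\mu(r):=Q^{(k)}(r)$ therefore gives polynomial physical time.

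For a function on a compact domain, the same argument runs with the input as a parameter. The bound $q$ is taken uniform over the domain, and all constants in \cref{lem:state-bound} and \cref{lem:one-step} are controlled by a maximum over the compact set of initial data.

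The main obstacle is the bookkeeping in the definition of $\tau^{(k)}_r$. \cref{lem:one-step} bounds $\tau_r=\int_0^{t_r}(1+f(u)^n)\,du$, which is the image of $t_r$ under the inverse reparametrization. This uses $|f|\le M_f(r)$ only on $[0,t_r]$, so I must make sure the bound does not require control of $f$ beyond $t_r$. Monotonicity handles this: the quantity needed is exactly the preimage of $t_r$, which the integral over $[0,t_r]$ computes.

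A second subtlety is finite-time blowup with $T^*<\infty$. There I will check that $t_r<T^*$, so the precision time lies inside the interval of existence. The integral bound then still holds, and \cref{prop:dcr-limit} ensures that the compiled time $\tau$ runs over the whole of $[0,\infty)$.
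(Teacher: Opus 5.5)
Your proposal is correct and is essentially the paper's argument: the paper proves the theorem by a bare appeal to \cref{thm:compilation}. Your added points fill in details the paper leaves implicit: that $\bar x(\tau)=x(S(\tau))$ with $S$ strictly increasing, so precision holds for \emph{all} $\tau>\tau^{(k)}_r$ as \cref{def:bounded-time} requires, and that $t_r<T^*$ in the blowup case. One small correction: \cref{thm:time-length} only gives $L\le v_{\max}t$, so it cannot turn an arc-length bound into a time bound. You never actually use it, though, since \cref{lem:one-step} already bounds $\tau_r$ directly.
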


\begin{proof}
\cref{thm:compilation}.
\end{proof}

\begin{remark}[The converse direction is not pursued]
\label{rmk:no-converse}
We do not claim a converse. For a PIVP whose vector field has
exponentially decaying norm along the trajectory, the arc length
$L(t)$ stays finite as $t \to \infty$, while the time required to
reach precision $e^{-r}$ can range from linear to exponential in
$r$. \cref{ex:bfk} is the canonical instance: BGP arc length is
bounded by a constant; the time-to-precision modulus is
$\Theta(r)$. Two such systems whose times-to-precision differ by
exponentials in $r$ can share the same finite arc length, so the
BGP resource does not distinguish between them. The BGP framework
records the cost of explosive (state-diverging) computation
faithfully, but on convergent systems it loses the rate-of-approach
information that constitutes the natural complexity question.
\end{remark}

\section{Constructible Bounded Time Complexity}
\label{sec:constructible}

Analyzing the convergence rate of a general ODE is notoriously
difficult.  Rather than analyzing arbitrary systems, we take the
opposite approach: we start from known functions with
well-understood asymptotics and \emph{construct} bounded
polynomial ODE systems that realize prescribed time complexity
classes.  All constructions compute the same value: the
real number~$1$.

Throughout this section, $r \in \mathbb{N}$ denotes the
\emph{precision parameter}: the system is required to produce $r$
\emph{nats} of accuracy at the output, i.e.,
$|x_1(t) - 1| < e^{-r}$
(\cref{def:bounded-time}; equivalently, $r$~nats are $r/\log 2$
bits in the standard computable-analysis sense). The
\emph{time modulus} $\mu(r)$ is the time the bounded PIVP needs to
reach this precision.

The design recipe is as follows. Given a target time modulus
$\mu(r)$, we:
\begin{enumerate}
  \item \textbf{Choose a precision clock.} Pick a monotone
    function $g(t) \to \infty$ with $g(\mu(r)) \ge r$, intuitively
    the number of nats of precision delivered by time $t$. The
    clock $g$ is a design abstraction, not a state variable of the
    PIVP: it is generally unbounded and therefore not directly
    realizable in a bounded system.
  \item \textbf{Apply the Möbius surrogate.} Substitute the
    fractional-linear (Möbius) compactification
    $u(t) = g(t)/(1 + g(t)) \in [0, 1)$, so $u \to 1$ as
    $g \to \infty$. The rational inverse $g = u/(1-u)$ turns
    polynomial dependence on $g$ into a rational function of $u$
    with denominators that are powers of $(1-u)$; residual
    denominators are cleared by the time reparametrization
    $s'(t) = (1-u)^{-k}$ familiar from
    \cref{sec:bounded-compilation}. The surrogate $u$, together
    with any auxiliary bounded states, is what the PIVP actually
    instantiates: $g$ never appears as a state, in the sense of
    \cref{rmk:pass-through}. (Throughout this section the direct
    Möbius closure happens to be already polynomial, so no
    reparametrization is needed.)
  \item \textbf{Read out via exponential approach.} Drive the
    output by $x_1'(t) = (1 - x_1(t)) \cdot h(\text{bounded
    states})$ so that $1 - x_1(t) = e^{-g(t)}$ (or a constant
    multiple). Then $|x_1(t) - 1| < e^{-r}$ exactly when
    $g(t) > r$, i.e., when $t > \mu(r)$.
\end{enumerate}

\subsection{Linear and Polynomial Time}
\label{subsec:linear}

The simplest instance is $x_1' = 1 - x_1$, $x_1(0) = 0$,
giving $x_1(t) = 1 - e^{-t}$ and $\mu(r) = r$. This is the
degenerate case of the recipe with $g(t) = t$. Since
$g'(t) \equiv 1$ is already constant, no Möbius surrogate is
needed: step~2 of the recipe is trivial and the bounded PIVP
closes on the single output variable $x_1$. We generalize this
to a tunable polynomial family in \cref{subsec:poly-family}
below.

\subsection{Tunable Polynomial Time: $\mu(r) = \Theta(r^n)$}
\label{subsec:poly-family}

\begin{theorem}[Polynomial-time constructibility]
\label{thm:poly-constructible}
For every $n \in \N$, there exists a bounded PIVP that
computes the constant~$1$ with time modulus
$\mu_n(r) = \Theta(r^n)$.
\end{theorem}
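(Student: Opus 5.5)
Following the design recipe, I will take the precision clock $g(t) = t^{1/n}$, so that $g(\mu(r)) = r$ exactly when $\mu(r) = r^n$. The readout then has to satisfy $1 - x_1(t) = e^{-g(t)}$, which means $x_1' = (1-x_1)\,g'(t)$ with $g'(t) = \tfrac1n t^{1/n-1}$. Two problems appear: $g$ is unbounded and $g'$ blows up at $t=0$. To avoid the singularity I will shift the clock to $g(t) = (1+t)^{1/n} - 1$. Then $g(0)=0$, and $g(t)\ge r$ holds iff $t \ge (r+1)^n - 1 = \Theta(r^n)$.

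\textbf{Step 1: bounded surrogate for the clock.} I introduce $w(t) = (1+t)^{-1/n} \in (0,1]$. It satisfies the polynomial ODE
\[
  w' = -\tfrac1n (1+t)^{-1/n-1} = -\tfrac1n w^{n+1}, \qquad w(0)=1.
\]
This is the M\"obius-type compactification of the unbounded quantity $1+g = 1/w$. The variable $w$ decreases monotonically to $0$, so it stays in $(0,1]$. The needed rate is also polynomial in $w$: $g'(t) = \tfrac1n (1+t)^{1/n-1} = \tfrac1n w^{n-1}$, because $(1+t)^{1/n-1} = w^{-(1-n)} = w^{n-1}$. The case $n=1$ gives $w^0=1$ and recovers $x_1'=1-x_1$.

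\textbf{Step 2: readout.} I set
\[
  x_1' = \tfrac1n (1-x_1)\, w^{n-1}, \qquad x_1(0)=0.
\]
Solving gives $1-x_1(t) = \exp\bigl(-\int_0^t \tfrac1n w^{n-1}\bigr) = \exp\bigl(-((1+t)^{1/n}-1)\bigr)$. Hence $x_1 \in [0,1)$, so the system $(w, x_1)$ is a bounded PIVP with rational coefficients and rational initial values. Its error satisfies $|x_1(t)-1| < e^{-r}$ iff $t > (r+1)^n - 1$.

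\textbf{Step 3: matching lower bound.} This is the step that needs care, since \cref{def:bounded-time} only asks for some modulus, while $\Theta$ requires that no smaller modulus works. Because the closed form is exact and the error is strictly decreasing in $t$, the precision $e^{-r}$ fails for every $t \le (r+1)^n-1$. So the optimal modulus is exactly $\mu_n(r) = (r+1)^n - 1 = \Theta(r^n)$.

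The main obstacle is choosing the surrogate exponent so that both $w'$ and $g'$ are polynomial in a single bounded variable. The choice $w=(1+t)^{-1/n}$ accomplishes this directly, and no time reparametrization is needed, consistent with the remark in the recipe. If a positive lower bound on speed were wanted, one would note it is not needed: only the exact solution is used.
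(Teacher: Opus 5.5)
Your proposal is correct and is essentially the paper's proof. The paper uses the same shifted clock $g(t)=(1+t)^{1/n}-1$ with surrogate $u=1-(1+t)^{-1/n}=1-w$, so your system $w'=-\tfrac1n w^{n+1}$, $x_1'=\tfrac1n(1-x_1)w^{n-1}$ is exactly its system \eqref{eq:poly-family} under $u=1-w$, and your matching lower bound is the paper's ``iff'' argument from the exact closed form $1-x_1(t)=e^{-g(t)}$.
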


\begin{proof}
We instantiate the recipe with precision clock
$g(t) = (1+t)^{1/n} - 1$ and Möbius surrogate
\[
  u(t) = \frac{g(t)}{1 + g(t)} = 1 - (1+t)^{-1/n} \in [0, 1).
\]
The bounded system on the two state variables
$(u, x) \in [0, 1]^2$ is
\begin{equation}
\label{eq:poly-family}
\begin{aligned}
  u'(t) &= \tfrac{1}{n}\,(1 - u(t))^{\,n+1}, \\
  x'(t) &= (1 - x(t))\cdot \tfrac{1}{n}\,(1 - u(t))^{\,n-1},
\end{aligned}
\qquad u(0) = x(0) = 0,
\end{equation}
a polynomial PIVP of degree $n+1$. Closure follows from
$1 - u = (1+t)^{-1/n}$, hence $1 + t = (1-u)^{-n}$, which gives
\[
  u' = \tfrac{1}{n}(1+t)^{-1/n - 1}
     = \tfrac{1}{n}(1-u)(1+t)^{-1}
     = \tfrac{1}{n}(1-u)\cdot(1-u)^{\,n}
     = \tfrac{1}{n}(1-u)^{\,n+1}.
\]
For the output, $g'(t) = \tfrac{1}{n}(1+t)^{(1-n)/n}
= \tfrac{1}{n}(1-u)^{\,n-1}$, so $x' = (1-x)\,g'
= (1-x)\tfrac{1}{n}(1-u)^{\,n-1}$. Integration yields
$1 - x(t) = \exp(-g(t)) = \exp(-(1+t)^{1/n} + 1)$. Therefore
$|x(t) - 1| < e^{-r}$ iff $(1+t)^{1/n} - 1 > r$, i.e.,
$t > (r+1)^n - 1 = \Theta(r^n)$.
\end{proof}

\begin{remark}[Smooth representative of the asymptotic class]
\label{rmk:smooth-rep}
The clock $g(t) = (1+t)^{1/n} - 1$ is the smooth representative
of the asymptotic class $t^{1/n}$: the shift $t \mapsto 1+t$ moves
the branch point of $t^{1/n}$ off the operating region $t \ge 0$.
Using $g_0(t) = t^{1/n}$ directly yields the closure
$u'(t) = \tfrac{1}{n}(1-u)^{\,n+1}/u^{\,n-1}$, which is singular at
the initial state $u = 0$ for $n \ge 2$; the ``$+1$'' shift is the
minimal regularization that restores polynomial closure with finite
initial slope $u'(0) = 1/n$. The same regularization appears
implicitly in the Lambert-$W$ construction of
\cref{subsec:lambert} and the iterated-logarithm tower of
\cref{subsec:tower}.
\end{remark}

Thus every polynomial degree is constructible: linear for $n=1$,
quadratic for $n=2$, cubic for $n=3$, etc.

\subsection{$\boldsymbol{\Theta(r \log r)}$ Time via
  Lambert $\boldsymbol{W}$}
\label{subsec:lambert}

\begin{proposition}[$\Theta(r \log r)$ constructibility]
\label{prop:rlogr}
There exists a bounded PIVP that computes the constant~$1$
with time modulus $\mu(r) = \Theta(r \log r)$.
\end{proposition}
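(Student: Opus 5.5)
We follow the three-step recipe of \cref{sec:constructible}, choosing the precision clock from the Lambert $W$ function. If $\mu(r) = \Theta(r\log r)$, the clock should grow like the inverse of $r \mapsto r\log r$, namely $g(t) \sim t/\log t \sim W(t)$-scaled. So we take
\[
  g(t) = \frac{t}{W(t)}\quad\text{or, more conveniently,}\quad g(t) = e^{W(1+t)}\ \text{shifted},
\]
using $e^{W(x)} = x/W(x)$. To keep the initial state regular (as in \cref{rmk:smooth-rep}), we set $w(t) = W(e+t)$. Then $w(0)=1$, $w$ is increasing, $w\to\infty$, and $w e^{w} = e+t$. We take the clock $g(t) = e^{w(t)} - e = (e+t)/w(t) - e$, which grows like $t/\log t$.

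For the Möbius step we introduce the bounded surrogate $u = 1/w \in (0,1]$, which decreases to $0$. This is the reciprocal form of the paper's Möbius substitution. We also use the second bounded variable $v = e^{-w} = w/(e+t) \in (0, e^{-1}]$. Differentiating $w e^w = e+t$ gives $w' = 1/(e^w(1+w)) = v/(1+w)$, so $u' = -u^2 w' = -u^2 v/(1+w)$. This is not yet polynomial, but $1/(1+w) = u/(1+u)$ still has the denominator $1+u$. We clear it by a time reparametrization $d t/d\sigma = 1+u \in [1,2]$, which changes physical time by at most a factor of $2$ and so preserves the $\Theta$ class. In the new time,
\[
  \frac{du}{d\sigma} = -u^3 v,\qquad \frac{dv}{d\sigma} = -v\,w'(1+u) = -u v^2 .
\]
Both right-hand sides are polynomial, and the variables stay in $[0,1]$.

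For readout we need $dg/dt = e^{w}w' = 1/(1+w) = u/(1+u)$, so $dg/d\sigma = u$. We set $x' = (1-x)\,u$ in $\sigma$-time, with $x(0)=0$. This gives $1-x = e^{-g} = e^{-(e^{w}-e)}$, which is bounded in $[0,1]$. Precision $e^{-r}$ is reached exactly when $e^{w} > r+e$, that is, when $w > \log(r+e)$. Equivalently, $e+t > (r+e)\log(r+e)$, so $t = \Theta(r\log r)$. The $\sigma$-time $\sigma(t)$ satisfies $t/2 \le \sigma \le t$, so it is also $\Theta(r\log r)$.

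The main obstacle is making the closure polynomial and checking the initial conditions. The initial values must be rational: $u(0)=1$ is fine, but $v(0)=e^{-1}$ is not. We fix this by choosing the shift so that $w(0)=1$ is replaced by a rational starting point. For example, we take $w(t) = W(t + c)$ with $c$ chosen so that $w(0)=\log 2$... but this is still irrational. The better fix is to start from $w(0)=w_0$ with $v(0)=e^{-w_0}$ both rational-compatible. We can do this by adjoining $v$ as a computed constant with its own fast-converging subsystem, or by starting at $v(0)=1$ (so $w_0=0$) and using $u = 1/(1+w)$ instead of $1/w$. The latter keeps everything bounded and rational: $u(0)=v(0)=1$. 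The extra $+1$ shift changes only lower-order terms, and the two-sided $\Theta(r\log r)$ bound follows from $W(x) = \log x - \log\log x + o(1)$.
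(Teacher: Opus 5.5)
Your approach is the paper's, and the fix you settle on in the last paragraph is exactly its system, though you assert closure rather than check it. With $w=W(t)$, $u=1/(1+w)$, $v=e^{-w}$ one has $w'=uv$, hence $u'=-u^3v$, $v'=-uv^2$, $x'=(1-x)u$ from $u(0)=v(0)=1$, $x(0)=0$. This is \eqref{eq:lambert-system} with your $u,v$ playing the roles of the paper's $1-v,1-u$. It needs no reparametrization, and its modulus is exactly $(r+1)\log(r+1)$, so no asymptotics of $W$ are required.

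The $W(e+t)$ variant you work out in detail is correct apart from $v(0)=e^{-1}\notin\Q$, and even that is cosmetic. Since $v\,e^{1/u}$ is a first integral of $du/d\sigma=-u^3v$, $dv/d\sigma=-uv^2$, any rational $v(0)>0$ only changes constants: with $w=1/u$ and $K=e\,v(0)$ one gets $(w-1)e^{w}=K\sigma$ and $1-x=e^{-(e^{w}-e)/K}$, which still gives $\Theta(r\log r)$.
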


\medskip\noindent\textbf{Step 1 (precision clock and target readout).}
Let $W$ denote the Lambert function satisfying
$W(t)\,e^{W(t)} = t$. We take the precision clock
\[
  g(t) \;=\; e^{W(t)} - 1,
\]
which is monotone with $g(0) = 0$ and $g(t) \to \infty$, and design
the readout to satisfy
\[
  1 - x_1(t) \;=\; e^{-g(t)} \;=\; e^{\,1 - e^{W(t)}}.
\]
The precision condition $|1 - x_1(t)| < e^{-r}$ is then equivalent
to $g(t) \ge r$, i.e., $e^{W(t)} \ge r + 1$, equivalently
$W(t) \ge \log(r + 1)$. The Lambert identity $t = W e^{W}$ gives
$t \ge (r+1) \log(r+1) = \Theta(r \log r)$, so the readout has the
desired modulus $\mu(r) = \Theta(r \log r)$.

\medskip\noindent\textbf{Step 2 (Möbius surrogates).}
The Lambert evolution
\[
  W'(t) \;=\; \frac{1}{(1+W)\, e^{W}}
\]
involves \emph{two} unbounded quantities, $1 + W$ and $e^{W}$. A
bounded polynomial encoding therefore needs to carry information
about both. We introduce two M\"obius surrogates:
\begin{align*}
  v(t) &:= \frac{W(t)}{1 + W(t)} \in [0, 1)
  & \text{(M\"obius surrogate of $W$)}, \\
  u(t) &:= \frac{g(t)}{1 + g(t)} \;=\; 1 - e^{-W(t)} \in [0, 1)
  & \text{(M\"obius surrogate of $g$)}.
\end{align*}
With $u(0) = v(0) = 0$ and $u, v \to 1$, The rational inversions
$1 + W = 1/(1-v)$ and $e^{W} = 1/(1-u)$ together polynomialize the
Lambert dynamics: from $W' = 1/((1+W) e^W) = (1-u)(1-v)$,
\[
  u' = e^{-W} W' = (1-u)^2(1-v),
  \qquad
  v' = \frac{W'}{(1+W)^2} = (1-v)^3(1-u).
\]
Neither surrogate alone would close in polynomial form: $v'$ requires
$e^{W}$ (read off $u$), and $u'$ requires $W'$ which in turn requires
$1+W$ (read off $v$). The pair $(u, v)$ is the minimal bounded encoding
of the Lambert pair $(W, e^{W})$.

\medskip\noindent\textbf{Step 3 (readout dynamics).}
The readout $x_1$ is the designated output variable of the bounded
PIVP. Following the recipe, it satisfies the exponential-approach
equation $x_1'(t) = (1 - x_1(t))\, g'(t)$, which with $x_1(0) = 0$
integrates to $1 - x_1(t) = e^{-g(t)}$ exactly, as targeted in
Step 1.  The chain rule gives
\[
  g'(t) = e^{W(t)}\, W'(t)
        = \frac{e^W}{(1+W)\, e^W}
        = \frac{1}{1 + W}
        = 1 - v,
\]
so the readout equation is $x_1' = (1 - x_1)(1 - v)$. The
surrogates $u, v$ are auxiliary states that exist only to
polynomialize $g'$; they are not themselves the output.

\medskip\noindent\textbf{Bounded polynomial system.}
Collecting steps 1--3 yields the polynomial PIVP on
$(u, v, x_1) \in [0, 1]^3$:
\begin{equation}
\label{eq:lambert-system}
\boxed{\;
\begin{aligned}
  u'   &= (1-u)^2(1-v), \\
  v'   &= (1-v)^3(1-u), \\
  x_1' &= (1-x_1)(1-v),
\end{aligned}
\;}
\qquad u(0) = v(0) = x_1(0) = 0.
\end{equation}

\begin{proof}[Proof of \cref{prop:rlogr}]
The bounded polynomial system~\eqref{eq:lambert-system} arises from
the recipe instantiation in Steps 1--3 above. Integrating
$x_1'/(1-x_1) = 1 - v = 1/(1+W)$ from $x_1(0) = 0$ yields
$1 - x_1(t) = e^{-g(t)} = e^{1 - e^{W(t)}}$ exactly, so
$|1 - x_1(t)| < e^{-r}$ holds iff $g(t) \ge r$, i.e., iff
$t \ge (r+1) \log(r+1) = \Theta(r \log r)$.
\end{proof}

\begin{figure}[H]
\centering
\includegraphics[width=\linewidth]{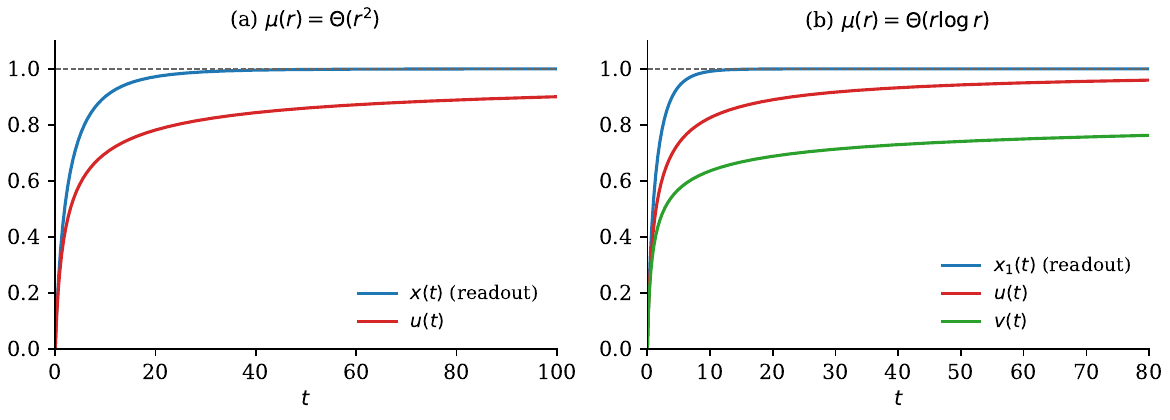}
\caption{Time evolution of the constructible families.
(a) Polynomial $\Theta(r^2)$: $u(t) = 1 - (1+t)^{-1/2}$,
$x(t) = 1 - e^{-(\sqrt{1+t} - 1)}$.
(b) Lambert-$W$ $\Theta(r \log r)$:
$u(t) = 1 - e^{-W(t)}$, $v(t) = W(t)/(1+W(t))$,
$x_1(t) = 1 - e \cdot e^{-e^{W(t)}}$.
All states remain in $[0, 1)$; the output convergence rate
determines the time modulus.}
\label{fig:constructible}
\end{figure}

An iterated-logarithm tower construction extends the
linear--$\Theta(r \log r)$--$\Theta(r^n)$ hierarchy to every
level of the elementary exponential hierarchy.

\subsection{Iterated-Logarithm Tower: Arbitrarily High Classes}
\label{subsec:tower}

\begin{construction}[Iterated-logarithm tower]
\label{con:tower}
For $k \ge 0$, define inductively
\begin{equation}\label{eq:tower}
  w_k' = (1 - w_k)^2 \prod_{j=0}^{k-1} (1 - w_j),
  \qquad w_k(0) = 0,
\end{equation}
with the convention that the empty product ($k = 0$) equals $1$.
The readout variable is
\begin{equation}\label{eq:tower-readout}
  x_1' = (1 - x_1) \prod_{j=0}^{k} (1 - w_j),
  \qquad x_1(0) = 0.
\end{equation}
\end{construction}

\begin{proposition}[Tower solutions]
\label{prop:tower}
Define $L_0(t) = t$ and $L_k(t) = \log(1 + L_{k-1}(t))$ for
$k \ge 1$. Then for each $k \ge 0$:
\[
  1 - w_k(t) = \frac{1}{1 + L_k(t)}.
\]
In particular, as $t \to \infty$,
$L_k(t) = \log^{(k)}(t) + O(1)$ (the $k$-fold iterated logarithm).
\end{proposition}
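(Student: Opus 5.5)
We prove the closed form by strong induction on $k$, using uniqueness of solutions of the polynomial (hence locally Lipschitz) ODE \eqref{eq:tower}. The plan is to guess a candidate, show it solves the same initial value problem, and conclude that it coincides with $w_k$.

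For each $k$, define the candidate $\tilde w_k(t) := 1 - 1/(1+L_k(t)) = L_k/(1+L_k)$. This is the M\"obius surrogate of $L_k$, in the sense of the recipe of \cref{sec:constructible}. Since $L_k(0)=0$ for all $k$ (because $\log 1 = 0$), we have $\tilde w_k(0)=0$, matching the initial condition.

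\textbf{Derivative identity.} We first prove by induction that
\[
  L_k'(t) = \prod_{j=0}^{k-1} \frac{1}{1+L_j(t)} .
\]
The base case is $L_0' = 1$, which is the empty product. The step follows from the chain rule:
\[
  L_k' = \frac{L_{k-1}'}{1+L_{k-1}} .
\]

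\textbf{Candidate solves the ODE.} Next we differentiate the candidate:
\[
  \tilde w_k' = \frac{L_k'}{(1+L_k)^2} = (1-\tilde w_k)^2 \prod_{j=0}^{k-1}(1-\tilde w_j).
\]
This is exactly \eqref{eq:tower} with $w_j$ replaced by $\tilde w_j$. The system for $(w_0,\dots,w_k)$ is triangular, so we argue inductively. Assuming $w_j=\tilde w_j$ for $j<k$, the $w_k$ equation becomes a scalar ODE $w' = (1-w)^2 a(t)$ with continuous $a$. Both $w_k$ and $\tilde w_k$ solve it from $0$, so uniqueness gives $w_k=\tilde w_k$ on the common domain. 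Since $L_k$ is defined and finite for all $t\ge0$ (each $L_j\ge 0$, so every logarithm is applied to an argument at least $1$), the candidate is global. Hence $w_k$ exists globally, stays in $[0,1)$, and $1-w_k = 1/(1+L_k)$.

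\textbf{Asymptotics.} We prove $L_k(t)=\log^{(k)}(t)+O(1)$ by induction. The main point is that an $O(1)$ error is absorbed when we pass to the next level. Write $L_{k-1}=\ell+c(t)$ with $\ell=\log^{(k-1)}t\to\infty$ and $c$ bounded. Then
\[
  L_k=\log(1+\ell+c)=\log \ell+\log\bigl(1+(1+c)/\ell\bigr)=\log^{(k)}t+o(1).
\]
In fact the error tends to $0$ for $k\ge1$, which is stronger than $O(1)$. For $k=1$, $\log(1+t)-\log t\to0$ directly.

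The only delicate point is making the uniqueness and global-existence argument explicit rather than just verifying the formula. This is handled by noting that the polynomial vector field is locally Lipschitz and the explicit solution never leaves $[0,1)$.
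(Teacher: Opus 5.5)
Your proof is correct and follows the paper's route: induction on $k$ driven by the chain-rule identity $L_k' = \prod_{j=0}^{k-1}(1+L_j)^{-1}$, with your verify-and-invoke-uniqueness step equivalent to the paper's separation of variables $du_k/u_k^2 = -L_k'\,dt$. You also supply two things the paper's proof leaves out: global existence of $w_k$, and an actual argument for the asymptotic clause $L_k(t) = \log^{(k)}(t) + O(1)$ (indeed $+\,o(1)$ for $k \ge 1$), which the paper states but does not prove.
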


\begin{proof}
By induction on $k$. Base case $k = 0$: $w_0' = (1 - w_0)^2$ has
solution $w_0(t) = t/(1+t)$, giving
$1 - w_0 = 1/(1+t) = 1/(1+L_0(t))$. Inductive step: let
$u_k = 1 - w_k$. Then
$u_k' = -u_k^2 \prod_{j=0}^{k-1}(1 + L_j(t))^{-1}$. The chain
rule gives
$L_k'(t) = L_{k-1}'(t)/(1 + L_{k-1}(t))
= \prod_{j=0}^{k-1}(1 + L_j(t))^{-1}$,
so $du_k/u_k^2 = -L_k'(t)\, dt$. Integrating:
$1/u_k(t) = 1 + L_k(t)$.
\end{proof}

\begin{theorem}[Exponential-tower modulus at depth $k$]
\label{thm:arbitrary}
For each $k \ge 0$, the depth-$k$ tower construction computes
$x_1(t) \to 1$ with $1 - x_1(t) = \exp(-L_{k+1}(t))$, and time
modulus $\mu(r) = \exp^{(k+1)}(r + O(1))$, the $(k{+}1)$-fold
iterated exponential.
\end{theorem}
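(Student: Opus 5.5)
The plan is to integrate the readout equation \eqref{eq:tower-readout} in closed form using \cref{prop:tower}, and then invert the resulting precision condition.

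\textbf{Step 1 (readout closed form).} By \cref{prop:tower}, $1 - w_j(t) = 1/(1+L_j(t))$ for every $j \le k$. Hence
\[
  \prod_{j=0}^{k}(1-w_j(t)) = \prod_{j=0}^{k}\frac{1}{1+L_j(t)}.
\]
The chain-rule identity from the inductive step of \cref{prop:tower}, applied at level $k+1$, gives $L_{k+1}'(t) = L_k'(t)/(1+L_k(t))$. Unrolling this recursion down to $L_0' = 1$ shows that the product equals $L_{k+1}'(t)$. Therefore $x_1'/(1-x_1) = L_{k+1}'(t)$. Integrating from $x_1(0)=0$, and using $L_{k+1}(0)=0$ (all $L_j(0)=0$ by induction since $\log 1 = 0$), yields $1 - x_1(t) = \exp(-L_{k+1}(t))$ exactly.

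\textbf{Step 2 (boundedness).} Each $w_j(t) = L_j/(1+L_j)$ lies in $[0,1)$, since $L_j(t) \ge 0$ and $L_j$ is increasing. Likewise $x_1 \in [0,1)$. So the system is a bounded PIVP in the sense of \cref{def:bounded-time}.

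\textbf{Step 3 (inverting the precision condition).} The condition $|x_1(t) - 1| < e^{-r}$ is equivalent to $L_{k+1}(t) > r$. Each map $L \mapsto \log(1+L)$ is an increasing bijection of $[0,\infty)$ with inverse $y \mapsto e^{y} - 1$. Hence
\[
  \mu(r) = E^{(k+1)}(r), \qquad E(y) := e^{y} - 1,
\]
exactly.

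\textbf{Step 4 (the asymptotic claim; expected to be the main obstacle).} It remains to show $E^{(k+1)}(r) = \exp^{(k+1)}(r + O(1))$, which requires control of the additive constants through the tower. The lower bound is easy: $E(y) \le e^{y}$ gives $E^{(k+1)}(r) \le \exp^{(k+1)}(r)$. For the upper direction, I would prove by induction on $m$ that $E^{(m)}(r) \ge \exp^{(m)}(r - c_m)$ for $r \ge r_0$ and some constant $c_m$.

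The inductive step is as follows. From $E^{(m)}(r) \ge \exp^{(m)}(r-c_m)$ we get
\[
  E^{(m+1)}(r) \ge \exp^{(m+1)}(r-c_m) - 1.
\]
To absorb the $-1$ into the innermost argument, I would use that $\exp^{(m+1)}$ has derivative at least $1$ on $[0,\infty)$. Then $\exp^{(m+1)}(y) - 1 \ge \exp^{(m+1)}(y-1)$ once $y \ge 1$, by the mean value theorem. This gives $c_{m+1} = c_m + 1$.

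This argument yields $\mu(r) = \exp^{(k+1)}(r + \theta(r))$ with $-(k+1) \le \theta(r) \le 0$ for large $r$, which is the stated $O(1)$ and completes the proof. The only delicate point is that the $O(1)$ must sit inside the innermost exponential. Errors cannot be allowed to multiply outward, and the derivative-at-least-one trick is exactly what keeps them additive in the argument.
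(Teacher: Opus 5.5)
Your proposal is correct and follows the paper's route: use \cref{prop:tower} to identify $\prod_{j=0}^{k}(1-w_j)$ with $L_{k+1}'(t)$, integrate to get $1-x_1(t)=\exp(-L_{k+1}(t))$, and invert the condition $L_{k+1}(t)>r$. The paper only asserts the last step, $t \ge \exp^{(k+1)}(r-O(1))$. Your exact modulus $\mu(r)=E^{(k+1)}(r)$ and your mean-value induction, which gives $\exp^{(k+1)}(r-(k+1))\le E^{(k+1)}(r)\le \exp^{(k+1)}(r)$ for large $r$, supply the justification the paper omits.
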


\begin{proof}
By \cref{prop:tower},
$\prod_{j=0}^k (1 - w_j) = L_{k+1}'(t)$, hence
$x_1'/(1 - x_1) = L_{k+1}'(t)$ and $-\log(1 - x_1) = L_{k+1}(t)$.
Precision $e^{-r}$ then requires $L_{k+1}(t) \ge r$, i.e.,
$t \ge \exp^{(k+1)}(r - O(1))$.
\end{proof}

\begin{center}
\renewcommand{\arraystretch}{1.3}
\begin{tabular}{ccc}
\textbf{Depth $k$}
  & \textbf{Convergence profile $1 - x_1(t)$}
  & \textbf{Time modulus $\mu(r)$} \\
\hline
$-1$ (no tower) & $e^{-t}$                    & $\Theta(r)$ \\
$0$             & $e^{-\log(1+t)} = (1+t)^{-1}$ & $\Theta(e^r)$ \\
$1$             & $e^{-\log\log(1+t)}$        & $\Theta(e^{e^r})$ \\
$k$             & $e^{-\log^{(k+1)}(t)}$       & $\Theta(\exp^{(k+1)}(r))$ \\
\end{tabular}
\end{center}

\begin{remark}
All constructions in this section---the linear, polynomial,
Lambert~$W$, and iterated-logarithm families---share the common
design pattern of the recipe in \cref{sec:constructible}: choose a
precision clock with known asymptotics, encode it via a Möbius
surrogate, and read off the time modulus. Every construction
computes the same trivial value~$1$; the complexity is intrinsic
to the dynamics, not to the target.
\end{remark}

\begin{remark}[Finite arc length across the hierarchy]
\label{rmk:finite-arc-length}
Every constructible class in this section has \emph{finite total
arc length}: the output error $1 - x_1(t)$ decays at least as
fast as $e^{-c\log t}$ for some $c > 0$ (polynomial convergence
at the slowest), so $\|y'(t)\|$ is integrable over $[0,\infty)$.
In the Bournez--Gra\c{c}a--Pouly framework, all of these
constructions therefore have the same constant-bounded arc
length.  Yet their time moduli range from $\Theta(r)$ through
$\Theta(r \log r)$ and $\Theta(r^n)$ to $\Theta(\exp^{(k+1)}(r))$
---a rich hierarchy that the arc-length resource cannot
distinguish.  This sharply illustrates the thesis of
\cref{subsec:why-length}: physical time, not trajectory length,
is the appropriate complexity resource for bounded systems.
\end{remark}

\section{Closure and Richness of Bounded Computation}
\label{sec:richness}

The preceding sections established the foundations of bounded
analog complexity: the compilation framework, the core
complexity definitions, and constructible time complexity classes.
We now demonstrate the computational richness of bounded GPACs
by showing closure under exponentiation.

\subsection{Closure Under Exponentiation}
\label{subsec:exponentiation}

\begin{theorem}[Bounded GPAC exponentiation]
\label{thm:power}
If $\alpha > 0$ and $\beta \in \R$ are computable by bounded
GPACs, then $\alpha^\beta$ is computable by a bounded GPAC.
\end{theorem}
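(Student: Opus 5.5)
We will build the bounded GPAC for $\alpha^\beta$ by running the given bounded systems for $\alpha$ and for $\beta$ side by side, then adding a few bounded auxiliary variables whose dynamics track $\beta\log\alpha$ and its exponential. Let $a(t)\to\alpha$ and $b(t)\to\beta$ be the outputs of the two given bounded PIVPs; take their disjoint union, so all of their states stay bounded. Since $\alpha>0$, there is a time $t_0$ after which $a(t)\ge\alpha/2$. A polynomial system cannot evaluate $\log a$ directly, so we track it with a bounded variable instead.

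\textbf{Step 1: logarithm.} Introduce $\ell$ with
\[
  \ell' = \lambda\,\bigl(1 - a\,e^{-\ell}\bigr),
\]
and realize $e^{-\ell}$ through a variable $m\approx e^{-\ell}$. To stay polynomial, we drive $m$ directly:
\[
  m' = \lambda\,(1 - a\,m)\cdot(\text{positive factor}).
\]
The simplest choice is $m' = \lambda(1-am)$, so $m\to 1/\alpha$ exponentially fast once $a$ is near $\alpha$.

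\textbf{Step 2: the exponent.} Let $R(t)$ satisfy $R'=\lambda(\beta\text{-term})$ chosen so that $R\to\beta\log\alpha$. The cleanest route avoids $\log$ altogether. Set $z=e^{R}$ and require $z^{\,1}=a^{b}$ in the limit. Differentiating the target relation $\log z = b\log a$ suggests the tracking law
\[
  z' = \lambda\, z\,\bigl(b\,\ell - \log z\bigr),
\]
where $\ell\to\log\alpha$ is obtained from the bounded system $\ell' = \lambda(a m_\ell - 1)$ with $m_\ell = e^{-\ell}$ and $m_\ell' = -m_\ell\ell'$, which is polynomial. Similarly, let $R$ obey $R' = \lambda(b\ell - R)$ and $z' = zR'$, so that $z=e^{R}$ holds identically and $z'$ is polynomial.

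\textbf{Step 3: convergence.} With constant inputs, both $\ell$ and $R$ obey stable linear relaxations, so $\ell\to\log\alpha$ and $R\to\beta\log\alpha$. This gives $z=e^{R}\to\alpha^\beta$. Boundedness also holds: $\ell$ and $R$ are filtered copies of bounded signals, so they are bounded, and $z=e^{R}$ and $m_\ell=e^{-\ell}$ are then bounded above and below away from zero.

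\textbf{Main obstacle.} The delicate part is the transient before $a$ is near $\alpha$. The function $a(t)$ may be small or even negative early on, and the relation $\ell' = \lambda(a e^{-\ell}-1)$ could then drive $\ell\to-\infty$ and make $m_\ell$ blow up. I would first try to handle this with a perturbation/ISS argument. Since $a$ is bounded, $\ell'\ge -\lambda(1+|a|e^{-\ell})$ only grows linearly over the finite window $[0,t_0]$, so $\ell$ stays finite there. After $t_0$ the relaxation is contracting, with rate $\ge\lambda\alpha e^{-\ell}/2$ once $\ell$ is bounded, which gives uniform boundedness for all $t$.

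\textbf{Rate.} The time modulus claimed later, $\max(\mu_\alpha,\mu_\beta)+O(1)$, would follow the same way. Each relaxation converges exponentially, and its error is a convolution of the input errors with $e^{-\lambda t}$. Taking $\lambda$ large enough keeps this filter from degrading the inputs' rates.
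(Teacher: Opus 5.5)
You build essentially the paper's construction: the logarithm stage is the same system in other coordinates, and only the exponent stage differs. There is a genuine gap in the transient before $a$ settles. Your actual working system is the one in Step~2: $\ell'=\lambda(a m_\ell-1)$, $m_\ell'=-m_\ell\ell'=\lambda m_\ell(1-a m_\ell)$, $R'=\lambda(b\ell-R)$, $z'=zR'$, with $\ell(0)=R(0)=0$ and $m_\ell(0)=z(0)=1$. You should state these initial values, and drop the Step~1 laws: $\ell'=\lambda(1-ae^{-\ell})$ has the unstable sign, and $m'=\lambda(1-am)$ tends to $1/\alpha$ rather than tracking $e^{-\ell}$.

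\textbf{The gap.} The inequality $\ell'\ge-\lambda(1+|a|e^{-\ell})$ does not give linear growth, because $e^{-\ell}$ is unbounded as $\ell$ decreases. Wherever $a\le -A<0$ you have $m_\ell'\ge\lambda m_\ell(1+Am_\ell)$. This is a Riccati inequality whose solutions blow up by time $\lambda^{-1}\ln(1+1/A)$.

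The clean way to see this is that $p:=e^{\ell}=1/m_\ell$ obeys the \emph{linear} filter $p'=\lambda(a-p)$, $p(0)=1$. Existence and boundedness of $\ell$ and $m_\ell$, hence of $R$ and $z$, therefore hinge entirely on $p$ staying away from $0$. Your remark that $\ell$ is a ``filtered copy of a bounded signal'' is true of $e^{\ell}$, not of $\ell$. This linear form also gives convergence for non-constant inputs directly ($p\to\alpha$, so $\ell\to\ln\alpha$), which your constant-input argument leaves implicit.

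Concretely, take the bounded GPAC $a'=1-a$, $a(0)=-10$, which computes $1$, and $\lambda=1$. Then $p(t)=1-11te^{-t}$ and $p(1)=1-11/e<0$, so $m_\ell$ blows up before $t=1$.

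\textbf{A repair.} A uniform fix is to feed $a^2$ instead of $a$: $\ell'=\lambda(a^2m_\ell-1)$, $m_\ell'=\lambda m_\ell(1-a^2m_\ell)$. Then $p'=\lambda(a^2-p)$ forces $p\ge e^{-\lambda t}$. With $a\ge\alpha/2$ for $t\ge t_0$, you also get $\min(e^{-\lambda t_0},\alpha^2/4)\le p\le\max(1,\sup a^2)$ for all $t$. Now $\ell\to2\ln\alpha$, and you filter $R'=\lambda(\tfrac12 b\ell-R)$.

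Alternatively, choose a rational $\lambda<\ln(1+1/A)/t_0$ with $A=-\inf a$. Note that this pulls against your later ``take $\lambda$ large'' remark.

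\textbf{Comparison with the paper.} With $\lambda=1$ and $a=x$, your $(\ell,m_\ell,e^{\ell})$ are exactly the paper's $(u,1-v,1+x_1)$, since $(1+x_1)'=x-(1+x_1)$. The real difference is the exponent. You low-pass filter $R$ toward $b\ell$ and set $z'=zR'$. The paper instead keeps $R=yu$ exactly via $z'=z(y'u+y(1-v)x_1')$, substituting the upstream derivative $y'$, which is polynomial in the upstream state. Your version needs no access to $y'$, at the cost of one extra filter stage. The paper's version keeps $z=e^{yu}$ identically.

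Note also that the paper's well-definedness paragraph needs precisely $1+x_1>0$. Its argument on $[0,T_0]$ does not establish this: the same example gives $1+x_1(t)=1-11te^{-t}$. So the $a^2$ (or small-rate) repair is needed there as well.
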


The key identity is $\alpha^\beta = e^{\beta \ln \alpha}$.
Direct computation of $\ln(x(t))$ is problematic when
$x(0) = 0$.  The solution is to introduce a shifted variable
$x_1(t) \to \alpha - 1$ with $x_1(0) = 0$, so that
$\ln(1 + x_1)$ starts at $\ln(1) = 0$, avoiding the
singularity.

\begin{construction}[Computing $\alpha^\beta$]
\label{con:power}
Given upstream GPACs computing $x(t) \to \alpha$ and
$y(t) \to \beta$, define:
\begin{equation}\label{eq:power-system}
\boxed{
\begin{aligned}
  x_1' &= (x - 1) - x_1, \\
  u'   &= (1 - v)\,x_1', \\
  v'   &= (1 - v)^2\,x_1', \\
  z'   &= z\,\bigl(y'\,u + y\,(1-v)\,x_1'\bigr),
\end{aligned}
}
\end{equation}
with $x_1(0) = u(0) = v(0) = 0$ and $z(0) = 1$.
Here $u = \ln(1 + x_1)$ and $v = x_1/(1 + x_1)$.
\end{construction}

\begin{proof}[Proof of \cref{thm:power}]
\emph{ODE derivation.}
Set $x_1 = x - 1$ (shifted to have $x_1(0) = 0$);
the tracking ODE $x_1' = (x - 1) - x_1$ is a low-pass
filter that drives $x_1 \to \alpha - 1$.
Define $u = \ln(1 + x_1)$.  By the chain rule,
$u' = x_1'/(1 + x_1)$.  Setting $v = x_1/(1 + x_1)$
(equivalently, $v = 1 - 1/(1+x_1)$), we obtain
$u' = (1 - v)\,x_1'$, since $1/(1 + x_1) = 1 - v$.
Differentiating $v$ by the quotient rule gives
$v' = x_1'/(1 + x_1)^2 = (1 - v)^2\,x_1'$.
Finally, $z = e^{y \cdot u}$, so
$z' = z\,(y'\,u + y\,u') = z\,(y'\,u + y\,(1-v)\,x_1')$.
All right-hand sides are polynomial in the state variables
$(x_1, u, v, z)$ and the upstream variables $(x, y)$.

\emph{Convergence.}
As $t \to \infty$: $x_1 \to \alpha - 1$,
$u \to \ln(\alpha)$, $v \to (\alpha - 1)/\alpha$, and
the exponent $y \cdot u \to \beta \ln \alpha$, giving
$z \to e^{\beta \ln \alpha} = \alpha^\beta$.

\emph{Well-definedness.}
The trajectory satisfies $1 + x_1(t) > 0$ for all $t \ge 0$.
Since $x_1' + x_1 = x(t) - 1$ with $x_1(0) = 0$, variation of
parameters gives
$1 + x_1(t) = e^{-t} + \int_0^t e^{-(t-s)} x(s)\, ds$.
For $t = 0$ this equals~$1$.  Since the upstream bounded GPAC
has $x(t) \to \alpha > 0$, there exists $T_0$ such that
$x(s) \ge \alpha/2 > 0$ for all $s \ge T_0$.  For $t \ge T_0$:
\[
  1 + x_1(t) \;\ge\; e^{-t} + \int_{T_0}^t e^{-(t-s)}
  \tfrac{\alpha}{2}\, ds
  \;=\; e^{-t} + \tfrac{\alpha}{2}(1 - e^{-(t-T_0)})
  \;\ge\; \tfrac{\alpha}{2}(1 - e^{-(t-T_0)}) \;>\; 0.
\]
For $t \in [0, T_0]$: $1 + x_1(t) = e^{-t} +
\int_0^t e^{-(t-s)} x(s)\, ds$, where $x$ is continuous on
the compact interval $[0, T_0]$ with minimum $x_{\min}$.
If $x_{\min} \ge 0$, then $1 + x_1(t) \ge e^{-t} > 0$.
If $x_{\min} < 0$, then $1 + x_1(t) \ge e^{-t} + x_{\min}
(1 - e^{-t}) = 1 + x_{\min} - (1 + x_{\min})e^{-t}$; since
$x(t) \to \alpha > 0$ on a compact interval, $|x_{\min}|$ is
finite and $1 + x_1(t) > 0$ for $t$ small enough that the
$e^{-t}$ term dominates, with continuity carrying the bound
across $[0, T_0]$ (the minimum of the continuous function
$1 + x_1$ on the compact interval is attained and positive).
Hence $1 + x_1(t) > 0$ for all $t \ge 0$.
\end{proof}

\begin{corollary}[CRN exponentiation]
\label{cor:crn-power}
The construction of \cref{con:power} can be implemented as a
CRN.  For $\alpha < 1$, the intermediate variables
$x_1, u, v$ may become negative and require dual-rail
encoding.  However, every term in the ODE for $z$ contains $z$ as a
factor ($z' = z \cdot (\ldots)$), so no negative contribution
to $z'$ can arise independently of $z$ itself.  By the
selective dual-railing criterion
of~\cite{huang2025selective}, $z$ is not infected and does
not require dual-rail encoding: it is represented directly
as a single species concentration.
\end{corollary}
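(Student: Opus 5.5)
My plan is to pass from the ODE system of \cref{con:power} to a CRN in the standard way, and then use the structure of the $z$-equation to decide which species need dual-rail encoding.

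\emph{Step 1: bring the system into CRN-implementable form.} A polynomial ODE is the mass-action ODE of some CRN only if every negative monomial in $X_i'$ contains $X_i$ as a factor. The equations for $u$, $v$ and $z$ contain the upstream derivatives $x_1'$ and $y'$. I will replace these by the polynomials they equal: $x_1' = (x-1) - x_1$, and $y' = p_y(\cdot)$ from the upstream bounded GPAC. After this substitution the combined system (upstream variables together with $x_1, u, v, z$) is a single PIVP. It is bounded, because the upstream system is bounded and the new variables converge by the proof of \cref{thm:power}.

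\emph{Step 2: dual-rail encoding.} Following~\cite{fages2017,huang2025selective}, each variable $w$ that can become negative, or whose equation has a negative monomial not divisible by $w$, is written as $w = w^+ - w^-$. The rails evolve by the positive and negative parts of the right-hand side, with an annihilation reaction $W^+ + W^- \to \emptyset$ added to keep the rails bounded. For $\alpha < 1$ the variable $x_1$ tends to $\alpha - 1 < 0$, so $u \to \ln\alpha < 0$. Also $v = x_1/(1+x_1)$ is negative. These are exactly the \emph{infected} variables in the sense of the selective criterion. The monomial $-x_1$ in $x_1'$ is self-referential, but the monomial $-1$ is not, so $x_1$ is infected whenever $\alpha < 1$. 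The infection then propagates to $u$ and $v$.

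\emph{Step 3: $z$ is not infected.} I will expand $z' = z\bigl(p_y u + y(1-v)((x-1)-x_1)\bigr)$ into monomials. Every monomial, including every negative one, carries the factor $z$. Under the selective criterion~\cite{huang2025selective}, a variable whose every negative term is divisible by itself is not infected, even when its equation uses infected variables as rates. So $z$ stays a single species. Its positivity follows from $z(0)=1$ and the explicit solution $z = e^{y u} > 0$.

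The main obstacle I expect is justifying that substituting the dual-rail differences $u^+ - u^-$, $v^+ - v^-$, and so on into the $z$-equation still leaves every negative monomial divisible by $z$. It does, because the factor $z$ multiplies the entire bracket. I also need to check that the dual-rail rails stay bounded, so that the resulting CRN keeps the boundedness required in \cref{sec:complexity}. For this I will rely on the annihilation reactions and cite the boundedness guarantees of~\cite{huang2025selective}.
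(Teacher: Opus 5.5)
Your proposal is correct and follows the paper's route. The paper's entire justification is your Step~3 observation: since $z' = z\cdot(\ldots)$, every negative monomial of $z'$ remains divisible by $z$ even after substituting $x_1'$, $y'$ and the dual-rail differences for the infected variables, so $z$ is uninfected under the selective criterion and stays a single positive species. Your Steps~1--2 only make explicit what the paper leaves implicit: substituting the upstream derivatives into a single PIVP, and citing boundedness of the rails. One harmless nit: the constant monomial $-1$ in $x_1'$ infects $x_1$ syntactically for every $\alpha$, not only when $\alpha<1$; this does not affect the conclusion about $z$.
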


\begin{theorem}[Complexity of $\alpha^\beta$]
\label{thm:power-complexity}
Let $\alpha > 0$ and $\beta \in \R$ be computable by bounded
GPACs with time moduli $\mu_\alpha(r)$ and $\mu_\beta(r)$.
Then $\alpha^\beta$ is computable by a bounded GPAC with time
modulus
\[
  \mu_{\alpha^\beta}(r) =
  \max\!\big(\mu_\alpha(r + C),\; \mu_\beta(r + C)\big) + O(1),
\]
where $C = C(\alpha, \beta)$ is a constant depending only on
$\alpha$ and $\beta$.
In particular, exponentiation preserves the time complexity
class: the cost is determined by the slower input.
\end{theorem}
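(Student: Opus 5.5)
The plan is to analyse \cref{con:power} directly. The key observation is that $z$ has a closed form. From $z' = z\,(y'u + y\,u')$ and $z(0)=1$ we get
\[
  z(t) = e^{y(t)\,u(t)}, \qquad u(t) = \ln(1+x_1(t)),
\]
exactly, because the well-definedness part of the proof of \cref{thm:power} gives $1+x_1>0$. Write $\ell = \beta\ln\alpha$. By the mean value theorem,
\[
  |z - \alpha^\beta| \le e^{\max(y u,\ell)}\,|y u - \ell| .
\]
For large $t$ the factor $e^{\max(yu,\ell)}$ is at most $2\alpha^\beta$, since $y u \to \ell$. So it suffices to make $|y u - \ell| < e^{-r-C_0}$, where $C_0$ depends only on $\alpha,\beta$. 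Split this as
\[
  |y u-\ell| \le |y-\beta|\,|u| + |\beta|\,|u-\ln\alpha| .
\]
Eventually $|u|\le |\ln\alpha|+1$. Moreover $|u-\ln\alpha| = |\ln((1+x_1)/\alpha)| \le (2/\alpha)\,|x_1-(\alpha-1)|$ once $1+x_1\ge\alpha/2$. Each term then reduces to an input precision at a shifted level $r+C$, with $C$ absorbing the constants $\ln(|\ln\alpha|+1)$, $\ln(2|\beta|/\alpha)$, and so on.

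The $y$-term is immediate. For $t>\mu_\beta(r+C)$ we have $|y-\beta|<e^{-r-C}$. The $x$-term needs an estimate for the low-pass filter. Set $e(t) = x_1-(\alpha-1)$, so that $e' = (x-\alpha)-e$. Variation of parameters gives
\[
  |e(t)| \le e^{-t}|e(0)| + \int_0^{t} e^{-(t-s)}\,|x(s)-\alpha|\,ds .
\]
Split the integral at $s_0=\mu_\alpha(r+C)$. The tail $s\ge s_0$ contributes at most $e^{-r-C}$. The head contributes at most $M e^{-(t-s_0)}$, where $M=\sup|x-\alpha|$ is finite because the upstream GPAC is bounded. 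The same bound also yields $1+x_1\ge \alpha/2$ for large $t$, which justifies the logarithm estimate above. Combining the pieces and taking $C$ large gives the modulus $\max(\mu_\alpha(r+C),\mu_\beta(r+C))$ plus the filter settling time.

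The main obstacle is this settling time. The head term $M e^{-(t-s_0)}$ only falls below $e^{-r-C}$ after an extra delay $t-s_0 \approx r + C + \ln M$. That delay is linear in $r$, not literally $O(1)$. I would first try to show it is harmless at the level of complexity class. Since $\mu_\alpha$ measures physical time in a bounded system, a nonconstant $\alpha$-approach through a bounded vector field needs $\mu_\alpha(r)=\Omega(r)$ whenever $x(0)\neq\alpha$. Then $\mu_\alpha(r+C)+r = O(\mu_\alpha(r+C))$, and the class is preserved. If the literal additive $O(1)$ is intended, I would fall back to removing the lag. One option is to drop the filter and use $x_1 := x-1$ directly, differentiating $\ln x$ via $u' = (1-v)x'$ and $v'=(1-v)^2x'$, when the upstream has $x(0)=1$. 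Otherwise I would replace $x_1' = (x-1)-x_1$ by a high-gain filter $x_1' = K((x-1)-x_1)$ and check that the gain can be chosen so that the lag is a constant independent of $r$. I expect that making this lag analysis match the stated $+O(1)$ is the delicate point. The rest is routine Lipschitz bookkeeping on a compact set.
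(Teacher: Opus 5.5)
Your decomposition is the paper's own: the closed form $z=e^{yu}$, the mean value theorem for $\exp$ and $\ln$, the triangle-inequality split of $yu-\beta\ln\alpha$, and variation of parameters for $x_1$. The only divergence is at the filter, and there your reading is right and the paper's is not. The paper reaches ``$+O(1)$'' in Step~1 by citing \cref{lem:input-limited}. That lemma presupposes an input error decaying strictly slower than the cutoff $e^{-t}$, and a modulus $\mu_\alpha$ gives no such information. For fast inputs, \cref{lem:module-limited} (cutoff $1$) says the filter throttles the error to $e^{-t}$.

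The literal bound is in fact false. Take the constant upstreams $x'=0$, $x(0)=2$ and $y'=0$, $y(0)=1/2$, so $\mu_\alpha=\mu_\beta\equiv 0$. The construction gives $1+x_1=2-e^{-t}$ and $|z-\sqrt2|\asymp e^{-t}$, i.e.\ modulus $r+O(1)$. No bounded GPAC can do better than this to $O(1)$: $|z(t)-\sqrt2|<e^{-r}$ for all $r$ and all $t>T$ forces $z\equiv\sqrt2$ by analyticity, contradicting $z(0)\in\Q$. So the gap you flagged cannot be closed. Your high-gain fallback also fails concretely: with $x_1'=K((x-1)-x_1)$ the head term is $Me^{-K(t-s_0)}$, so the lag is $(r+C+\ln M)/K$. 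That is still linear in $r$, because $K$ is fixed in the system.

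What your split actually proves is
$\mu_{\alpha^\beta}(r)\le\max\bigl(\mu_\alpha(r+C)+r+C',\ \mu_\beta(r+C)\bigr)$.
\begin{itemize}
\item This preserves every class with $\mu_\alpha=\Omega(r)$, which covers all classes of \cref{sec:constructible}. State $\mu_\alpha=\Omega(r)$ as a hypothesis, though. Your claim that it holds automatically when $x(0)\ne\alpha$ is unproved: a Gr\"onwall argument only bounds how fast the whole state can approach its $\omega$-limit set, not how fast one coordinate can approach a value.
\item The exact $+O(1)$ form is recovered under a regularity hypothesis on the actual error. For example, suppose $x-\alpha$ is eventually nonzero with $|(x-\alpha)'|\le\lambda|x-\alpha|$ for some $\lambda<1$. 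Then $|e_1(t)|\le\frac{2}{1-\lambda}|x(t)-\alpha|$ for large $t$.
\item Your $x(0)=1$ shortcut is fine as a special case, but it additionally needs $x>0$ throughout.
\end{itemize}

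Finally, do not rely on the well-definedness part of the proof of \cref{thm:power}; it is wrong as written. The bounded upstream $a'=b-a$, $b'=-b$, $x'=100(a-b)$ with $(a,b,x)(0)=(0,1,2)$ has $x=2-100te^{-t}\to 2$. Yet $1+x_1=2-e^{-t}-50t^2e^{-t}$ vanishes before $t=0.2$, and $v$ blows up there. You need $1+x_1>0$ as an assumption. Alternatively, use a positivity-preserving tracker such as $w'=w(x-w)$, $w(0)=1$, for which $u=\ln w$ satisfies the polynomial equation $u'=x-w$.
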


\begin{proof}
Since the construction operates entirely within the GPAC model
(no dual-railing), the analysis is standard computable-analysis
error propagation~\cite{ko1991complexity,weihrauch2000computable}
through four steps. Let $\varepsilon_\alpha(t) = |x(t) - \alpha|$
and $\varepsilon_\beta(t) = |y(t) - \beta|$.

\medskip\noindent\textbf{Step 1: $x_1$ tracking error.}
Define $e_1(t) = x_1(t) - (\alpha - 1)$. From
\eqref{eq:power-system}:
$e_1' = (x(t) - \alpha) - e_1$, $e_1(0) = -(\alpha - 1)$.
This is a first-order linear ODE with rate $1$, giving
\[
  |e_1(t)| \le e^{-t}\,|\alpha - 1|
  + \int_0^t e^{-(t-s)}\,\varepsilon_\alpha(s)\,ds.
\]
This is the low-pass filter of
\cref{subsec:error-analysis} with cutoff $1$. By
\cref{lem:input-limited}, $|e_1(t)|$ inherits the decay
profile of $\varepsilon_\alpha(t)$.

\medskip\noindent\textbf{Step 2: $u$ and $v$ errors.}
By the mean value theorem,
$|u(t) - \ln\alpha| \le |e_1(t)|/\alpha_{\min}$ and
$|v(t) - (\alpha{-}1)/\alpha| \le |e_1(t)|/\alpha_{\min}^2$,
where $\alpha_{\min} = \min_t (1 + x_1(t)) > 0$. These are
proportional to $|e_1|$ with constant factors.

\medskip\noindent\textbf{Step 3: Exponent error.}
$R(t) = y(t)\,u(t)$ targets $R^* = \beta \ln \alpha$. By the
triangle inequality,
\[
  |R(t) - R^*|
  \le |\ln \alpha|\,\varepsilon_\beta(t)
  + \frac{|\beta|}{\alpha_{\min}}\,|e_1(t)| + o(1).
\]

\medskip\noindent\textbf{Step 4: Output error.}
Since $z = e^R$ and $\alpha^\beta = e^{R^*}$, the mean value
theorem gives $|z - \alpha^\beta| \le e^{\max(R, R^*)}|R - R^*|$.
For large $t$, there exists $C = C(\alpha, \beta) > 0$ such
that $|z - \alpha^\beta| < e^{-r}$ whenever both
$\varepsilon_\beta(t) < e^{-(r + C)}$ and
$|e_1(t)| < e^{-(r + C)}$. Hence
$\mu_{\alpha^\beta}(r) =
\max\!\bigl(\mu_\alpha(r + C), \mu_\beta(r + C)\bigr) + O(1)$.
\end{proof}

Exponentiation preserves the complexity class: the time modulus
of $\alpha^\beta$ is determined by the \emph{slower} input.
For example, both $\pi$ and $e$ are computable in linear time
($\mu(r) = \Theta(r)$) by known GPAC
constructions~\cite{huang2019rtcrn2}; therefore $(\pi/4)^e$
and $e^\pi$ are also computable in linear time.

\begin{remark}[The pass-through technique]
\label{rmk:pass-through}
The quantity $R(t) = y(t) \cdot \ln(1 + x_1(t))$ appears
in the exponent of $z$ but is never tracked as an explicit
variable---only its derivative $R'$ enters the ODE for $z$.
This is a general design technique for bounded GPACs:
intermediate quantities that are transcendental or unbounded
can \emph{pass through} the system as long as their
derivatives close over the existing bounded state variables.
Bounded surrogate compilation (\cref{constr:surrogates})
is another instance: both the physical time $t$ and the
unbounded variable $f$ pass through the compiled system
without appearing in the final ODE.
\end{remark}

\begin{figure}[t]
  \centering
  \includegraphics[width=\linewidth]{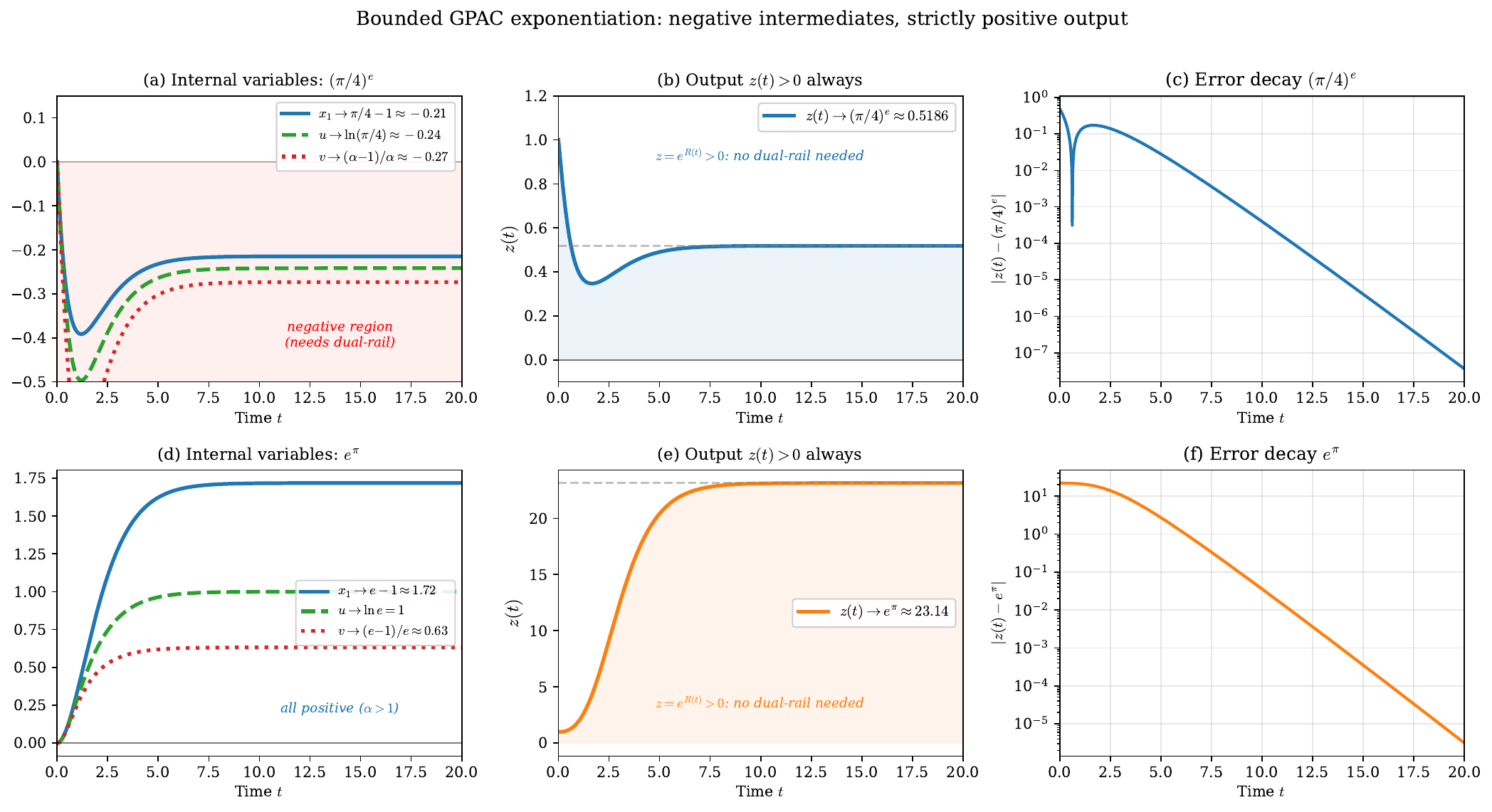}
  \caption{Computing $(\pi/4)^e$ (top) and $e^\pi$ (bottom) via
    the bounded GPAC construction (\cref{con:power}), using
    the ODE constructions of~\cite{huang2019rtcrn2} for
    $\pi/4$, $e$, and $\pi$.
    \textbf{(a)}~When $\alpha < 1$, the intermediate variables
    $x_1$, $u$, $v$ take negative values (red shading),
    requiring selective dual-rail encoding for CRN
    implementation~\cite{huang2025selective}.
    \textbf{(b,\,e)}~The output $z(t) = e^{R(t)}$ is strictly
    positive for all $t \ge 0$, regardless of the sign of the
    intermediates: no dual-railing is needed for the output
    species.
    \textbf{(d)}~When $\alpha > 1$ (as in $e^\pi$), all
    intermediate variables remain positive.
    \textbf{(c,\,f)}~Both computations converge exponentially,
    consistent with linear time modulus $\mu(r) = \Theta(r)$.}
  \label{fig:alpha-beta}
  \smallskip\noindent
  {\small\sffamily Interactive notebook:
  \url{https://colab.research.google.com/github/zinan-huang/notebooks/blob/main/bounded_gpac_exponentiation.ipynb}}
\end{figure}

\section{Bounded CRN Time Complexity}
\label{sec:crn-pipeline}

The GPAC-to-CRN pipeline has two steps: \emph{dual-rail
encoding} converts a signed polynomial ODE into mass-action
kinetics, and a \emph{readout module} extracts the algebraic
difference $X_1^+ - X_1^-$ as a single species concentration.
Dual-rail encoding, following~\cite{fages2017}
(see also~\cite{huang2019rtcrn2}), is exact:
$X_1^+ - X_1^-$ satisfies the original ODE, and all
dual-rail species remain bounded.

The question is whether the readout step---subtraction---also
preserves time complexity.  Subtraction is itself a dynamical
system with its own convergence rate; if it is too slow, it
could degrade the complexity class inherited from the GPAC.
We show that subtraction acts as a low-pass filter: it is
transparent to all sub-exponential inputs, and thus preserves
the time complexity class for all constructible classes of
\cref{sec:constructible}.

\subsection{Readout subtraction modules}
\label{subsec:readout}

After dual-rail encoding, $x_1(t) = X_1^+(t) - X_1^-(t)$
satisfies the original ODE but is not a single species
concentration.  A readout species $Z$ must compute
$Z(t) \to \alpha$ via mass-action reactions.
Anderson and Joshi~\cite{anderson2024} provide the
\emph{absolute-difference module}
($A + Z \to A + 2Z$, $B + Z \to B$, $2Z \to Z$):
\begin{equation}\label{eq:aj-readout}
  \dot{Z} = x_1(t)\,Z - Z^2, \qquad Z^* = \alpha.
\end{equation}
An alternative is \emph{double inversion}~\cite{huang2019rtcrn2},
computing $1/x_1$ then $1/(1/x_1) = x_1$ via:
\begin{equation}\label{eq:two-recip-system}
  \dot{Y} = 1 - x_1(t)\,Y, \qquad
  \dot{Z} = 1 - Y\,Z,
\end{equation}
with $Y^* = 1/\alpha$ and $Z^* = \alpha$.

Anderson and Joshi~\cite{anderson2024} define the \emph{speed}
$\rho \defeq -\limsup_{t \to \infty}
\frac{\ln |x(t) - x^*|}{t}$
and show each arithmetic module has speed $\ge 1$.
Positive speed ($\rho > 0$) is equivalent to exponential
convergence, i.e., linear time $\mu(r) = \Theta(r)$---the
real-time class studied in~\cite{huang2019rtcrn2}.
Our time modulus (\cref{def:bounded-time}) refines and
generalizes this notion: it distinguishes the sub-exponential
profiles ($\rho = 0$) that the speed framework treats
uniformly, with $\rho = \lim_{r \to \infty} r/\mu(r)$
when the limit exists.

\subsection{Low-pass filter analysis}
\label{subsec:error-analysis}

Write $x_1(t) = \alpha + \epsilon(t)$ and
$Z(t) = \alpha + \delta(t)$.
Linearizing~\eqref{eq:aj-readout} around $Z^* = \alpha > 0$:
\begin{equation}\label{eq:aj-linear}
  \dot{\delta} + \alpha\,\delta = \alpha\,\epsilon(t),
\end{equation}
with solution
\begin{equation}\label{eq:aj-convolution}
  \delta(t) = e^{-\alpha t}\,\delta(0)
  + \alpha \int_0^t e^{-\alpha(t-s)}\,\epsilon(s)\,\diff s.
\end{equation}
This is a first-order low-pass filter with cutoff rate
$\alpha$.  The convolution kernel $e^{-\alpha(t-s)}$
exponentially forgets the past: if $\epsilon(s)$ varies
slowly relative to the forgetting rate, it can be pulled
out of the integral, giving $\delta(t) \approx \epsilon(t)$
---the module is transparent.  Conversely, if
$\epsilon(s) = C e^{-\beta s}$ with $\beta > \alpha$, the
integral evaluates to
$\frac{\alpha C}{\beta - \alpha} e^{-\alpha t}$: the fast
input component vanishes and the output decays at the
module's own rate $\alpha$.  In short: \emph{a fast module
cannot speed up a slow input; it can only slow down a fast
one.}

For the double-inversion subtraction
of~\cite{huang2019rtcrn2}, the cascade has effective speed
$\min(\alpha, 1/\alpha)$; the analysis is analogous.

\subsection{Convergence preservation}
\label{subsec:preservation}

The convolution~\eqref{eq:aj-convolution} determines how the
readout error $\delta(t)$ relates to the input error
$\epsilon(t)$.  The key dichotomy:

\begin{lemma}[Input-limited regime]
\label{lem:input-limited}
If $\epsilon(t)$ is continuously differentiable with
$\epsilon'(t) \to 0$, $\epsilon(t) \to 0$, and
$e^{\alpha t}\,|\epsilon(t)| \to \infty$
(i.e., $\epsilon$ decays slower than $e^{-\alpha t}$), then
\[
  \delta(t) \sim \epsilon(t) \qquad (t \to \infty).
\]
The readout error tracks the input error asymptotically.
\end{lemma}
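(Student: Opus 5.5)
The plan is to work directly with the convolution formula \eqref{eq:aj-convolution} for the linearized equation \eqref{eq:aj-linear}, and to reduce $\delta(t)/\epsilon(t)$ to a quotient to which L'H\^opital's rule (the $\cdot/\infty$ form) applies. First, since $\epsilon$ is continuous and $e^{\alpha t}|\epsilon(t)| \to \infty$, $\epsilon$ has no zeros for $t \ge T_1$, hence has a constant sign there. This makes $D(t) := e^{\alpha t}\epsilon(t)$ eventually monotone in absolute value with $|D(t)| \to \infty$. Multiplying \eqref{eq:aj-convolution} by $e^{\alpha t}$ gives
\[
  \frac{\delta(t)}{\epsilon(t)} \;=\; \frac{N(t)}{D(t)},
  \qquad
  N(t) \defeq \delta(0) + \alpha\int_0^t e^{\alpha s}\epsilon(s)\,\diff s .
\]
L'H\^opital then yields
\[
  \lim \frac{N}{D} \;=\; \lim \frac{\alpha e^{\alpha t}\epsilon}{e^{\alpha t}(\alpha\epsilon+\epsilon')}
  \;=\; \lim \frac{1}{1+\epsilon'(t)/(\alpha\,\epsilon(t))},
\]
provided the right-hand limit exists. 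In the version stated this form needs only $|D| \to \infty$, not any condition on $N$.

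The main obstacle is the last limit. The stated hypothesis $\epsilon' \to 0$ is not by itself enough: what is needed is the \emph{logarithmic} condition $\epsilon'(t)/\epsilon(t) \to 0$, meaning $\epsilon$ is slowly varying relative to the filter's forgetting rate~$\alpha$. Indeed, for $\epsilon(t) = e^{-\beta t}$ with $0<\beta<\alpha$, all the listed hypotheses hold, yet the computation gives $\delta/\epsilon \to \alpha/(\alpha-\beta) \neq 1$. So I would prove the lemma under the hypothesis $\epsilon'/\epsilon \to 0$, which is the precise form of ``varies slowly'' in the heuristic of \cref{subsec:error-analysis}. I would also record the general version: if $\epsilon'/\epsilon \to -\lambda$ with $0 \le \lambda < \alpha$, then $\delta \sim \frac{\alpha}{\alpha-\lambda}\,\epsilon$. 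This already suffices for preserving the time modulus up to an additive constant in $r$.

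I would then check that every sub-exponential profile of \cref{sec:constructible} satisfies $\epsilon'/\epsilon \to 0$. There $1-x_1 = e^{-g(t)}$, so $\epsilon'/\epsilon = -g'(t)$. For the polynomial family, the Lambert-$W$ clock and the tower, $g'(t) \to 0$ by the explicit formulas, namely $(1-u)^{n-1}/n$, $1-v$ and $\prod_j(1-w_j)$ respectively. Only the linear case $g' \equiv 1$ falls under the $\lambda>0$ variant.

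Finally, to pass from the linearization to the actual module \eqref{eq:aj-readout}, I would write the exact error equation
\[
  \dot\delta + \alpha\delta \;=\; \alpha\epsilon + \epsilon\delta - \delta^2 .
\]
The plan is to treat the term $\epsilon\delta - \delta^2 = o(|\delta|)$, after first establishing $\delta \to 0$, as a perturbation of the cutoff rate. The cutoff then becomes $\alpha + \epsilon(t) - \delta(t) \to \alpha$, and the same integrating-factor/L'H\^opital argument applies with $e^{\int(\alpha+\epsilon-\delta)}$ in place of $e^{\alpha t}$.
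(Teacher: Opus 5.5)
Your counterexample is correct. For $\epsilon(t)=e^{-\beta t}$ with $0<\beta<\alpha$, every hypothesis of the lemma holds, yet $\delta(t)/\epsilon(t)\to\alpha/(\alpha-\beta)$. So the lemma as stated is false, and the error is in the paper's proof, not in yours.

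The paper integrates by parts,
\[
  \alpha\int_0^t e^{-\alpha(t-s)}\epsilon(s)\,\diff s
  = \epsilon(t)-e^{-\alpha t}\epsilon(0)-\int_0^t e^{-\alpha(t-s)}\epsilon'(s)\,\diff s,
\]
and then uses $\epsilon'\to 0$ to show that the last integral tends to $0$. That only yields $\delta(t)-\epsilon(t)\to 0$, which says nothing, since both tend to $0$ anyway. The conclusion $\delta=\epsilon+o(\epsilon)$ needs the remainder to be $o(\epsilon(t))$, not $o(1)$. In your example the remainder equals $-\frac{\beta}{\alpha-\beta}\bigl(e^{-\beta t}-e^{-\alpha t}\bigr)$, which is of the same order as $\epsilon$.

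Strengthening the hypothesis to $\epsilon'/\epsilon\to 0$ is the right repair. It is the precise meaning of the paper's own heuristic that $\epsilon$ ``varies slowly relative to the forgetting rate.'' Your check that $\epsilon'/\epsilon=-g'\to 0$ for the polynomial ($n\ge 2$), Lambert-$W$ and tower clocks shows the downstream corollary for those classes survives.

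Your L'H\^opital argument on $N/D=\delta/\epsilon$ is sound and takes a genuinely different route from the paper. It gives the limit $\alpha/(\alpha-\lambda)$ in one line and covers the whole range $0\le\lambda<\alpha$. The paper's integration-by-parts route can also be salvaged under your hypothesis:
\begin{enumerate}
  \item Suppose $|\epsilon'|\le\eta|\epsilon|$ and $\epsilon$ has constant sign on $[T_\eta,\infty)$.
  \item Then the remainder is at most $\frac{\eta}{\alpha}P(t)+O(e^{-\alpha t})$, where $P(t)=\alpha\int_{T_\eta}^t e^{-\alpha(t-s)}|\epsilon(s)|\,\diff s$.
  \item Feeding this back into the same identity gives $P(t)\le(1-\eta/\alpha)^{-1}\bigl(|\epsilon(t)|+O(e^{-\alpha t})\bigr)$.
  \item Hence the remainder is $O(\eta|\epsilon(t)|)$, which is $o(\epsilon(t))$ as $\eta\to 0$.
\end{enumerate}
That bootstrap gives explicit non-asymptotic bounds; yours is shorter and yields the exact constant.

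There are three small slips, none of which affects your main argument.
\begin{enumerate}
  \item Constant sign of $\epsilon$ together with $|D|\to\infty$ does not by itself make $|D|$ eventually monotone. The $\cdot/\infty$ form of L'H\^opital needs $D'\neq 0$ eventually. That follows from your strengthened hypothesis via $D'=e^{\alpha t}\epsilon\,(\alpha+\epsilon'/\epsilon)$.
  \item The linear profile (computing $1$, so $\alpha=1$, with $g'\equiv 1$) has $\lambda=1=\alpha$. It is therefore the resonant boundary case, not an instance of your $\lambda<\alpha$ variant, and it also violates the lemma's hypothesis $e^{\alpha t}|\epsilon|\to\infty$. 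There, $\epsilon=Ce^{-\alpha t}$ gives $\delta=e^{-\alpha t}(\delta(0)+\alpha C t)$.
  \item In the nonlinear step the effective rate is $\alpha-\epsilon+\delta$, not $\alpha+\epsilon-\delta$; equivalently, $\dot\delta+Z\delta=Z\epsilon$ with $Z=\alpha+\delta$. This is harmless, since either rate tends to $\alpha$. In any case the lemma concerns only the linearized $\delta$ of \eqref{eq:aj-convolution}, so that step goes beyond what is required.
\end{enumerate}
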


\begin{proof}
The homogeneous part $e^{-\alpha t}\,\delta(0) = o(\epsilon(t))$.
For the convolution term, integration by parts gives
\[
  \alpha \int_0^t e^{-\alpha(t-s)}\,\epsilon(s)\,\diff s
  \;=\; \epsilon(t) - e^{-\alpha t}\,\epsilon(0)
  - \int_0^t e^{-\alpha(t-s)}\,\epsilon'(s)\,\diff s.
\]
The term $e^{-\alpha t}\,\epsilon(0) = o(\epsilon(t))$.
For the remainder, since $\epsilon'(s) \to 0$, for any
$\eta > 0$ there exists $T_\eta$ such that
$|\epsilon'(s)| < \eta$ for $s \ge T_\eta$.  Then
\[
  \Bigl|\int_0^t e^{-\alpha(t-s)}\,\epsilon'(s)\,\diff s\Bigr|
  \;\le\; e^{-\alpha(t - T_\eta)} \int_0^{T_\eta}
  |\epsilon'|\,\diff s \;+\; \frac{\eta}{\alpha}.
\]
As $t \to \infty$ the first term vanishes, leaving
$\limsup |\cdot| \le \eta/\alpha$.  Since $\eta$ is arbitrary,
the remainder tends to zero.
Hence $\delta(t) = \epsilon(t) + o(\epsilon(t))$,
giving $\delta(t) \sim \epsilon(t)$.
\end{proof}

\begin{lemma}[Module-limited regime]
\label{lem:module-limited}
If $\epsilon(t) = C_\epsilon\,e^{-\beta t}$ with
$\beta > \alpha$, then
\[
  \delta(t) = \Bigl(\delta(0) +
  \frac{\alpha\,C_\epsilon}{\beta - \alpha}\Bigr)\,
  e^{-\alpha t} + O(e^{-\beta t}).
\]
The readout is throttled to the module's intrinsic speed
$\alpha$.
\end{lemma}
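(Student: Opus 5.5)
The plan is to work directly from the convolution formula \eqref{eq:aj-convolution}, which already solves the linearized error equation \eqref{eq:aj-linear}, and to substitute $\epsilon(s) = C_\epsilon e^{-\beta s}$ explicitly. Since the input is a pure exponential, no asymptotic argument in the style of \cref{lem:input-limited} is needed: the integral can be computed in closed form. I would first factor the kernel as $e^{-\alpha(t-s)} = e^{-\alpha t} e^{\alpha s}$. This reduces the convolution term to
\[
  \alpha C_\epsilon e^{-\alpha t} \int_0^t e^{-(\beta-\alpha)s}\,\diff s
  \;=\; \frac{\alpha C_\epsilon}{\beta-\alpha}\bigl(e^{-\alpha t} - e^{-\beta t}\bigr),
\]
which uses $\beta \neq \alpha$ to avoid the resonant case.

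Adding the homogeneous term $e^{-\alpha t}\delta(0)$ and collecting the coefficients of $e^{-\alpha t}$ gives
\[
  \delta(t) = \Bigl(\delta(0) + \frac{\alpha C_\epsilon}{\beta-\alpha}\Bigr)e^{-\alpha t} - \frac{\alpha C_\epsilon}{\beta-\alpha}\,e^{-\beta t}.
\]
This is exactly the claimed form, with the $O(e^{-\beta t})$ remainder given explicitly and having constant $\alpha |C_\epsilon|/(\beta-\alpha)$. The hypothesis $\beta > \alpha$ then means $e^{-\beta t} = o(e^{-\alpha t})$. So whenever the bracketed coefficient is nonzero, $\delta$ decays at the module rate $\alpha$ and not at the input rate $\beta$. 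This is the ``throttling'' statement.

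The main subtlety is that the statement is really about the linearization \eqref{eq:aj-linear}, not the nonlinear module \eqref{eq:aj-readout}. I would state explicitly that the lemma is proved for the linearized error dynamics, as the surrounding subsection sets it up.

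If the nonlinear version is wanted, the first thing to try is to write the exact error equation $\dot\delta = -\alpha\delta + \alpha\epsilon + \epsilon\delta - \delta^2$. I would treat $\epsilon\delta - \delta^2$ as a perturbation of size $O(|\delta|(|\epsilon|+|\delta|))$. Since $\delta \to 0$ exponentially, this perturbation is $O(e^{-2\alpha t} + e^{-(\alpha+\beta)t})$. Feeding it back through the same variation-of-constants formula changes only the constant in front of $e^{-\alpha t}$ and adds higher-order exponentials. Those higher-order terms are $O(e^{-\beta t})$ when $\beta \le 2\alpha$; otherwise they are $o(e^{-\alpha t})$, which still suffices for the throttling conclusion.

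A final remark would note the degenerate case where the bracketed coefficient vanishes: then $\delta$ decays at rate $\beta$. This does not contradict the lemma, since the equality as stated still holds.
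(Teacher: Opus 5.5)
Your proposal is correct and follows the paper's own proof: substitute $\epsilon(s) = C_\epsilon e^{-\beta s}$ into the convolution formula \eqref{eq:aj-convolution} and evaluate the integral in closed form. This gives $\delta(t) = e^{-\alpha t}\delta(0) + \frac{\alpha C_\epsilon}{\beta-\alpha}\bigl(e^{-\alpha t} - e^{-\beta t}\bigr)$, the claimed expansion with an explicit remainder. Your aside on the nonlinear error equation goes beyond the lemma, which concerns only the linearization \eqref{eq:aj-linear}. Its premise that $\delta \to 0$ exponentially would need its own justification, for example via the linear equation $\dot w = 1 - x_1 w$ satisfied by $w = 1/Z$.
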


\begin{proof}
Substituting $\epsilon(s) = C_\epsilon\,e^{-\beta s}$ into
the convolution~\eqref{eq:aj-convolution}:
\[
  \delta(t) = e^{-\alpha t}\,\delta(0) +
  \frac{\alpha\,C_\epsilon}{\beta - \alpha}
  \bigl(e^{-\alpha t} - e^{-\beta t}\bigr). \qedhere
\]
\end{proof}

\begin{theorem}[CRN readout preserves time complexity]
\label{thm:crn-preserves}
Let a bounded GPAC compute $\alpha > 0$ with time modulus
$\mu(r)$.  Let $Z$ be the readout species implemented via
either the Anderson--Joshi absolute-difference module or the
double-inversion method.  Then:
\begin{enumerate}
  \item If $\mu(r) = \omega(r/\alpha)$
    (input convergence slower than the module's intrinsic
    rate $e^{-\alpha t}$), then $\mu_Z(r) = \mu(r) + O(1)$.
    The time complexity class is \textbf{preserved exactly}.
  \item If $\mu(r) = O(r/\alpha)$
    (input convergence at least as fast as $e^{-\alpha t}$),
    then $\mu_Z(r) = r/\alpha + O(1)$.
    The readout is \textbf{throttled} to linear time with
    rate $\alpha$.
\end{enumerate}
In either case, the readout speed in the Anderson--Joshi sense
is $\rho_Z = \min(\rho_{\textup{in}}, \alpha)$, consistent
with their Composition Theorem.
\end{theorem}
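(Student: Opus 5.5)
The plan is to reduce the nonlinear readout dynamics to the linear convolution \eqref{eq:aj-convolution} and then read off the two regimes from \cref{lem:input-limited} and \cref{lem:module-limited}.

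\emph{Step 1: reduction to the linear filter.} For the Anderson--Joshi module \eqref{eq:aj-readout}, write $Z = \alpha + \delta$, $x_1 = \alpha + \epsilon$. The exact error equation is $\dot\delta = -\alpha\delta + \alpha\epsilon + \epsilon\delta - \delta^2$. I will first show that $\delta(t) \to 0$. Since $Z>0$ is preserved and $x_1 \to \alpha > 0$, after some time $T_0$ the logistic comparison $\dot Z \ge (\alpha/2) Z - Z^2$ keeps $Z$ bounded away from $0$ and below $\max(Z(0), 2\alpha)$. Hence $Z \to \alpha$. Beyond some $T_1$ we then have $|\epsilon| + |\delta| \le \eta$. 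On $[T_1,\infty)$, treat $(\epsilon - \delta)\delta$ as a perturbation of the rate, giving
\[
  \delta(t) = e^{-\int_{T_1}^t (\alpha + \delta - \epsilon)}\,\delta(T_1) + \alpha\int_{T_1}^t e^{-\int_s^t(\alpha+\delta-\epsilon)}\,\epsilon(s)\,\diff s,
\]
that is, the convolution \eqref{eq:aj-convolution} with a time-varying rate $\alpha + o(1)$. Because the rate correction is integrable whenever $\epsilon,\delta$ are integrable, the factor $e^{-\int(\delta-\epsilon)}$ converges to a positive constant. Where integrability fails (slow inputs), I will only need the rate $\alpha\pm\eta$ bounds. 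For the double-inversion cascade \eqref{eq:two-recip-system} the same argument applies twice. First, $Y$ is a linear filter with rate $x_1 \to \alpha$ driven by $\epsilon$ scaled by $1/\alpha$. Then $Z$ is a filter with rate $Y \to 1/\alpha$. I therefore expect the effective cutoff to be $\min(\alpha,1/\alpha)$. To match the statement's ``$\alpha$'', I will interpret $\alpha$ as the module's cutoff rate in that case.

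\emph{Step 2: input-limited case.} If $\mu(r) = \omega(r/\alpha)$, the input error eventually decays more slowly than $e^{-\alpha t}$. Applying \cref{lem:input-limited}, with the robust version above, gives $\delta(t)\sim\epsilon(t)$. So $|\delta(t)| \le 2|\epsilon(t)|$ for large $t$, and $|\delta(t)|<e^{-r}$ holds once $|\epsilon(t)|<e^{-(r+\log 2)}$. This yields $\mu_Z(r) \le \mu(r+O(1))$. The reverse inequality holds similarly, since $|\delta| \ge |\epsilon|/2$. The statement writes this as $\mu(r)+O(1)$. For the constructible classes of \cref{sec:constructible}, exact formulas for $\epsilon$ let me convert the shift in $r$ into an additive $O(1)$ in the modulus. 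For the linear case this conversion is immediate, and for slower classes the class is preserved.

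\emph{Step 3: module-limited case.} If $\mu(r)=O(r/\alpha)$, bound $|\epsilon(s)|\le Ce^{-\beta s}$ with $\beta\ge\alpha$. Then \cref{lem:module-limited} (or a direct estimate of the convolution when $\beta=\alpha$, which gives a $te^{-\alpha t}$ term) yields $|\delta(t)| = \Theta(e^{-\alpha t})$ up to polynomial factors. Hence $\mu_Z(r) = r/\alpha + O(1)$, or $O(\log r)$ in the boundary case. The speed identity $\rho_Z=\min(\rho_{\rm in},\alpha)$ then follows by taking $-\limsup \frac{1}{t}\log|\delta|$ in each regime.

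\emph{Main obstacle.} The hardest step is justifying the linearization globally. Specifically, I must show that the quadratic terms $\epsilon\delta-\delta^2$ do not alter the asymptotic equivalence in \cref{lem:input-limited} when $\epsilon$ decays slowly and is not integrable. I would handle this by a sandwich argument: for $t\ge T_\eta$, compare $\delta$ with solutions of linear filters with rates $\alpha\pm\eta$. Each such filter tracks $\epsilon$ by \cref{lem:input-limited}, with ratio tending to $1$, and then letting $\eta\to0$ squeezes $\delta/\epsilon\to1$. A secondary worry is the hypothesis $\epsilon'\to0$ required by \cref{lem:input-limited}. This holds for the constructible classes, and more generally holds because $\epsilon' = p(y)$ converges along a convergent bounded trajectory.
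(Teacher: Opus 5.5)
Your route is the paper's: its proof linearizes without comment, cites \cref{lem:input-limited} for case~(1) and \cref{lem:module-limited} for case~(2), and replaces $\alpha$ by $\min(\alpha,1/\alpha)$ for double inversion. Your justification of the linearization, your remark that $\delta\sim\epsilon$ only yields $\mu(r\pm O(1))$, and the $te^{-\alpha t}$ boundary term are care the paper does not take.

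The gap, which the paper shares, is the step $\delta\sim\epsilon$ in case~(1). \cref{lem:input-limited} does not deliver it from $\epsilon'\to 0$. Its proof shows only that the remainder $\int_0^t e^{-\alpha(t-s)}\epsilon'(s)\,\diff s$ tends to $0$. But $\delta=\epsilon+o(\epsilon)$ needs that remainder to be $o(\epsilon(t))$, and $\epsilon(t)\to 0$ as well.

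Here is a concrete failure. Take $\alpha=1$ and the bounded PIVP $w'=-w^2$, $s'=c$, $c'=-s$, $x_1'=cw-(2+s)w^2$ with $w(0)=c(0)=1$, $s(0)=0$, $x_1(0)=3$. Then $\epsilon(t)=(2+\sin t)/(1+t)$, which meets every hypothesis of the lemma. Yet the readout gives $\delta(t)=(2+\tfrac12(\sin t-\cos t))/(1+t)+O(t^{-2})$. So $\delta/\epsilon$ does not tend to $1$, and $\mu(r)\approx 3e^r$ while $\mu_Z(r)\approx(2+1/\sqrt{2})\,e^r$. The class survives, but $\mu_Z=\mu+O(1)$ fails.

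Your sandwich does not repair this, since each comparison filter is analyzed by the same lemma. Two smaller points also fail here:
\begin{itemize}
\item In this example $(s,c)$ never converges, so your ``convergent trajectory'' argument for $\epsilon'\to0$ is off. Barbalat's lemma is what actually gives $\epsilon'\to0$.
\item $\mu=\omega(r/\alpha)$ does not even give $e^{\alpha t}|\epsilon(t)|\to\infty$ when $\epsilon$ has zeros, e.g.\ $\sin t/(1+t)$.
\end{itemize}

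The missing hypothesis is slow variation, $\epsilon'(t)/\epsilon(t)\to 0$. Under it, $|\epsilon(s)|\le|\epsilon(t)|e^{\eta(t-s)}$ on the tail, so the remainder is at most $\eta|\epsilon(t)|/(\alpha-\eta)+o(\epsilon(t))$. This gives $\delta\sim\epsilon$, and then your sandwich does absorb the quadratic terms.

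Slow variation holds for the polynomial ($n\ge 2$), Lambert-$W$ and tower constructions, where $\epsilon=-e^{-g(t)}$ and $\epsilon'/\epsilon=-g'(t)\to 0$. That is all the sub-exponential rows of \cref{cor:crn-realizable} need. Moreover, the quantitative form $\delta/\epsilon=1+O(g'(t))$ gives $g(\mu_Z(r))=r+O(g'(\mu_Z(r)))$. This yields the additive $O(1)$ you wanted, which is essentially the filter's delay of $1/\alpha$.

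Case~(2) has a similar issue. The bound $|\epsilon|\le Ce^{-\beta s}$ only yields $\mu_Z(r)\le r/\alpha+O(1)$. The matching lower bound, and $\rho_Z=\alpha$ rather than something larger, require the $e^{-\alpha t}$ coefficient in \cref{lem:module-limited} to be nonzero.
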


\begin{proof}
Case~(1) follows from \cref{lem:input-limited}: since
$\delta(t) \sim \epsilon(t)$, the time to reach precision $e^{-r}$
is determined by $\epsilon$, not by the module.

Case~(2) follows from \cref{lem:module-limited}: the dominant term
is $C\, e^{-\alpha t}$, giving $\mu_Z(r) = r/\alpha + O(1)$.

For the double-inversion method, the same argument applies with
$\alpha$ replaced by $\min(\alpha, 1/\alpha)$.
\end{proof}

\begin{corollary}[Constructible classes are CRN-realizable]
\label{cor:crn-realizable}
All constructible bounded time complexity classes of
\cref{sec:constructible} (computing $\alpha = 1$) are
realizable by CRNs with the same time modulus:
\begin{center}
\renewcommand{\arraystretch}{1.3}
\begin{tabular}{lcc}
\textbf{Class} & \textbf{GPAC modulus $\mu(r)$}
  & \textbf{CRN readout modulus $\mu_Z(r)$} \\
\hline
Linear & $\Theta(r)$ & $\Theta(r)$ \\
Lambert $W$ & $\Theta(r \log r)$ & $\Theta(r \log r)$ \\
Polynomial degree $n$ & $\Theta(r^n)$ & $\Theta(r^n)$ \\
\end{tabular}
\end{center}
\end{corollary}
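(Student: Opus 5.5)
The plan is to treat each row of the table separately and apply \cref{thm:crn-preserves} with $\alpha = 1$. First, every construction of \cref{sec:constructible} must become a CRN whose readout $x_1 = X_1^+ - X_1^-$ has the same trajectory as the GPAC output. The systems \eqref{eq:poly-family} and \eqref{eq:lambert-system}, and the linear system $x_1' = 1 - x_1$, have all states in $[0,1]$. Their right-hand sides are polynomials with signed coefficients, so we dual-rail them following~\cite{fages2017}. That encoding is exact, and all dual-rail species stay bounded, as recalled at the start of \cref{sec:crn-pipeline}. The difference $X_1^+ - X_1^-$ then satisfies the original ODE, so its error is $\epsilon(t) = x_1(t) - 1$ exactly. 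We then feed it into the Anderson--Joshi module \eqref{eq:aj-readout} or the double-inversion module \eqref{eq:two-recip-system}. With $\alpha = 1$ both modules have cutoff rate $1$, since $\min(1, 1/1) = 1$.

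For the Lambert-$W$ and polynomial rows with $n \ge 2$, we verify the hypotheses of \cref{lem:input-limited}, which gives case~(1) of \cref{thm:crn-preserves}.
\begin{itemize}
\item In the polynomial case, $\epsilon(t) = -\exp(1 - (1+t)^{1/n})$. It is $C^1$, tends to $0$, and satisfies $e^{t}|\epsilon(t)| \to \infty$ because $(1+t)^{1/n} = o(t)$. Its derivative is $\epsilon'(t) = \epsilon(t)\cdot\tfrac1n(1+t)^{1/n-1}$, which tends to $0$.
\item In the Lambert case, $\epsilon(t) = -e^{1 - e^{W(t)}}$ and $e^{W(t)} = t/W(t) = o(t)$, so $e^{t}|\epsilon| \to \infty$. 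The derivative is $\epsilon' = \epsilon\,(1-v)$, which tends to $0$.
\end{itemize}
Hence $\delta \sim \epsilon$, and $\mu_Z(r) = \mu(r) + O(1)$. One technical point: the lemma is stated for the linearization \eqref{eq:aj-linear}. I would justify it for the nonlinear $\dot Z = x_1 Z - Z^2$ by writing the exact error equation $\dot\delta = -(1+\delta)\delta + (1+\delta)\epsilon$. Once $\delta$ is small, the quadratic term is $o(\delta)$, and a Gr\"onwall comparison keeps $\delta \sim \epsilon$. Alternatively, Riccati linearization via $1/Z$ gives an exact solution.

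The linear row, and $n = 1$, is the boundary case $\mu(r) = r = O(r/\alpha)$. Here $\epsilon(t) = -e^{-t}$ decays at exactly the module's rate, so neither lemma applies verbatim: \cref{lem:module-limited} requires $\beta > \alpha$. I would compute the convolution \eqref{eq:aj-convolution} directly. It gives $\delta(t) = e^{-t}\delta(0) - t e^{-t}$, so $|\delta| < e^{-r}$ once $t \ge r + \log r + O(1)$. That is $\Theta(r)$, which suffices for the table's $\Theta$-level claim even though it is not $r + O(1)$. The same conclusion holds for the double-inversion cascade, where the two-stage convolution gives a $t^2 e^{-t}$ factor.

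The main obstacle is precisely this resonant linear case, together with making the nonlinear-to-linear transfer rigorous. The class claim survives because polynomial factors cost only an additive $O(\log r)$. So the table entries hold at the $\Theta$ level, and in the sub-exponential rows they hold up to $+O(1)$.
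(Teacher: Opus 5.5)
Your proposal is correct and follows the paper's own proof. You use case~(1) of \cref{thm:crn-preserves} (via \cref{lem:input-limited} with $\alpha = 1$) for the sub-exponential Lambert-$W$ and degree-$n \ge 2$ rows, and the resonant boundary computation $\delta(t) = e^{-t}\delta(0) - t e^{-t}$ for the linear row, whose $O(\log r)$ overhead is absorbed into $\Theta(r)$. Your explicit hypothesis checks, the nonlinear-to-linear transfer (e.g.\ via $w = 1/Z$, which satisfies the linear equation $\dot w = 1 - x_1 w$), and the $t^2 e^{-t}$ double-inversion case add rigor the paper omits. The only slips are harmless signs, $\epsilon' = -\epsilon \cdot \tfrac{1}{n}(1+t)^{1/n-1}$ and $\epsilon' = -\epsilon(1-v)$, since only $\epsilon' \to 0$ is used.
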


\begin{proof}
With $\alpha = 1$, the module speed is $1$ (exponential decay
$e^{-t}$).  The Lambert~$W$ and polynomial constructions
converge sub-exponentially, hence slower than $e^{-t}$.  By
\cref{thm:crn-preserves}(1), the time modulus is preserved.
The linear case ($\mu(r) = \Theta(r)$) is the boundary:
$\epsilon(t) = e^{-t}$ matches the module's intrinsic rate.
L'H\^opital gives a polynomial prefactor
($\delta(t) \sim t\,e^{-t}$), changing the modulus by at most
$O(\log r)$, which is absorbed into $\Theta(r)$.
\end{proof}

\section{Conclusion}
\label{sec:conclusion}

We have introduced bounded analog complexity theory, in which all
state variables remain in compact intervals and the time
variable $t$ is the sole diverging resource. Bounded surrogate
compilation transforms unbounded PIVPs into bounded ones
preserving all computed limits, and a polynomial change of time
scale sends BGP polynomial-trajectory-length-and-time computation
to bounded polynomial-$t$ computation. Constructive examples
realize fine-grained time moduli from $\Theta(r)$ through
$\Theta(r \log r)$ and $\Theta(r^n)$ to every level of the
elementary exponential hierarchy. Closure under exponentiation
demonstrates the computational richness of the bounded setting,
with the output variable remaining strictly positive and thus
directly realizable as a CRN species without dual-railing. The
full GPAC-to-CRN pipeline preserves time complexity, via a
low-pass filter analysis of the readout module.

The complexity developed here concerns computation of real
numbers via the limit readout, and our compilation absorbs
unbounded variables only insofar as they affect the asymptotic
limit. Existing polynomial-ODE simulations of Turing machines
in the BGP framework~\cite{graca2008,bournez2017,fages2017}
encode the tape positionally in state variables that grow as
$10^n$ in tape length~$n$. Whether Turing-complete simulation
can be carried out in entirely bounded polynomial state, in the
rational-coefficient PIVP and mass-action CRN form of the
present paper, is left open.

\paragraph*{Acknowledgments.}
We thank the anonymous DNA32 reviewers for their detailed and
constructive feedback.

\bibliography{master}

\begin{thebibliography}{10}

\bibitem{anderson2024}
David~F. Anderson and Badal Joshi.
\newblock Chemical mass-action systems as analog computers: implementing
  arithmetic computations at specified speed.
\newblock {\em Theoretical Computer Science}, 1025:114983, 2025.

\bibitem{angluin2007}
Dana Angluin, James Aspnes, David Eisenstat, and Eric Ruppert.
\newblock The computational power of population protocols.
\newblock {\em Distributed Computing}, 20(4):279--304, 2007.

\bibitem{bournez2012lpp}
Olivier Bournez, Pierre Fraigniaud, and Xavier Koegler.
\newblock Computing with large populations using interactions.
\newblock In {\em Mathematical Foundations of Computer Science (MFCS)}, volume
  7464 of {\em Lecture Notes in Computer Science}, pages 234--246. Springer,
  2012.

\bibitem{bournez2017}
Olivier Bournez, Daniel~S. Gra{\c{c}}a, and Amaury Pouly.
\newblock Polynomial time corresponds to solutions of polynomial ordinary
  differential equations of polynomial length.
\newblock {\em Journal of the ACM}, 64(6):38:1--38:76, 2017.

\bibitem{bournez2018}
Olivier Bournez and Amaury Pouly.
\newblock A universal ordinary differential equation.
\newblock {\em Logical Methods in Computer Science}, 16(1), 2020.
\newblock Article No.~28.

\bibitem{chen2014}
Ho-Lin Chen, David Doty, and David Soloveichik.
\newblock Deterministic function computation with chemical reaction networks.
\newblock {\em Natural Computing}, 13(4):517--534, 2014.

\bibitem{chen2023rateindep}
Ho-Lin Chen, David Doty, Wyatt Reeves, and David Soloveichik.
\newblock Rate-independent computation in continuous chemical reaction
  networks.
\newblock {\em Journal of the ACM}, 70(3):22:1--22:61, 2023.

\bibitem{fages2017}
Fran\c{c}ois Fages, Guillaume Le~Guludec, Olivier Bournez, and Amaury Pouly.
\newblock Strong {T}uring completeness of continuous chemical reaction networks
  and compilation of mixed analog-digital programs.
\newblock In {\em Computational Methods in Systems Biology (CMSB 2017)}, volume
  10545 of {\em Lecture Notes in Computer Science}, pages 108--127. Springer,
  2017.

\bibitem{graca2008}
Daniel~S. Gra{\c{c}}a, Manuel~L. Campagnolo, and Jorge Buescu.
\newblock Computability with polynomial differential equations.
\newblock {\em Advances in Applied Mathematics}, 40(3):330--349, 2008.

\bibitem{graca2003}
Daniel~S. Gra{\c{c}}a and Jos{\'e}~F{\'e}lix Costa.
\newblock Analog computers and recursive functions over the reals.
\newblock {\em Journal of Complexity}, 19(5):644--664, 2003.

\bibitem{huang2025selective}
Nicholas Haisler, Xiang Huang, Andrei~N. Migunov, Khalid Mohammed, and Garrett
  Provence.
\newblock A selective dual-railing technique for general-purpose analog
  computers.
\newblock In {\em Unconventional Computation and Natural Computation (UCNC
  2025)}, volume 16364 of {\em Lecture Notes in Computer Science}, pages
  397--402. Springer, 2025.

\bibitem{huang2022lpp}
Xiang Huang and Rachel~N. Huls.
\newblock Computing real numbers with large-population protocols having a
  continuum of equilibria.
\newblock In {\em 28th International Conference on DNA Computing and Molecular
  Programming (DNA 28)}, volume 238 of {\em LIPIcs}, pages 6:1--6:17. Schloss
  Dagstuhl, 2022.

\bibitem{huang2019rtcrn2}
Xiang Huang, Titus~H. Klinge, and James~I. Lathrop.
\newblock Real-time equivalence of chemical reaction networks and analog
  computers.
\newblock In {\em 25th International Conference on DNA Computing and Molecular
  Programming (DNA 25)}, volume 11648 of {\em Lecture Notes in Computer
  Science}, pages 37--53. Springer, 2019.

\bibitem{ko1991complexity}
Ker-I Ko.
\newblock {\em Complexity Theory of Real Functions}.
\newblock Birkh{\"a}user, 1991.

\bibitem{lipshitz1987}
Leonard Lipshitz and Lee~A. Rubel.
\newblock A differentially algebraic replacement theorem, and analog
  computability.
\newblock {\em Proceedings of the American Mathematical Society},
  99(2):367--372, 1987.

\bibitem{pour-el1974}
Marian~B. Pour-El.
\newblock Abstract computability and its relation to the general purpose analog
  computer.
\newblock {\em Transactions of the American Mathematical Society}, 199:1--28,
  1974.

\bibitem{shannon1941}
Claude~E. Shannon.
\newblock Mathematical theory of the differential analyzer.
\newblock {\em Journal of Mathematics and Physics}, 20(1--4):337--354, 1941.

\bibitem{soloveichik2010}
David Soloveichik, Matthew Cook, Erik Winfree, and Jehoshua Bruck.
\newblock Computation with finite stochastic chemical reaction networks.
\newblock {\em Natural Computing}, 7(4):615--633, 2008.

\bibitem{turing1936}
Alan~M. Turing.
\newblock On computable numbers, with an application to the
  {E}ntscheidungsproblem.
\newblock {\em Proceedings of the London Mathematical Society}, 42(1):230--265,
  1936.

\bibitem{weihrauch2000computable}
Klaus Weihrauch.
\newblock {\em Computable analysis: an introduction}.
\newblock Springer, 2000.

\end{thebibliography}

\end{document}